\newtheorem{result}{Result}
\newtheorem{lemma}{Lemma}
\newtheorem{theorem}{Theorem}
\newcommand{\tr}{\mathrm{Tr}}
\newcommand{\qf}{\mathbf{F}_Q}
\newcommand{\finv}{\tilde{\mathbf{F}}_Q^{-1}}
\newcommand{\im}{\mathrm{Im}}
\newcommand{\re}{\mathrm{Re}}
\newcommand{\var}{\mathrm{Var}}
\newcommand{\spec}{\mathrm{Spc}}
\newcommand{\cond}{\mathrm{Cnd}}
\newcommand{\samplf}{N_{F}}
\newcommand{\samplg}{N_{g}}
\newcommand{\samplgd}{N_{smpl}}
\newcommand{\samplfmat}{\mathbf{N}_{F}}
\newcommand{\samplgvec}{\mathbf{N}_{g}}
\newcommand{\nopt}{N_\mathrm{opt}}
\newcommand{\hamilnorm}{\spec[\mathcal{H}]}
\newcommand{\fF}{f_F}
\newcommand{\fg}{f_g}
\newcommand{\summplmat}{Appendix}
\begin{document}


\title{Measurement cost of metric-aware variational quantum algorithms}

\author{Barnaby van Straaten}
\author{B\'alint Koczor}

\email{balint.koczor@materials.ox.ac.uk}

\affiliation{Department of Materials, University of Oxford, Parks Road, Oxford OX1 3PH, United Kingdom}


\begin{abstract}
We consider metric-aware quantum algorithms which use a quantum computer to efficiently estimate both a matrix and a vector object. For example, the recently introduced quantum natural gradient approach uses the Fisher matrix as a metric tensor to correct the gradient vector for the co-dependence of the circuit parameters. We rigorously characterise and upper bound the number of measurements required to determine an iteration step to a fixed precision, and propose a general approach for optimally distributing samples between matrix and vector entries. Finally, we establish that the number of circuit repetitions needed for estimating the quantum Fisher information matrix is asymptotically negligible for an increasing number of iterations and qubits.
\end{abstract}

\maketitle

\section{Introduction}

With quantum computers rising as realistic technologies, attention has turned to how such machines could perform as variational tools~\cite{farhi2014quantum,peruzzo2014variational,wang2015quantum,PRXH2,PhysRevA.95.020501,mcclean2016theory,
	PhysRevLett.118.100503,Li2017,PhysRevX.8.011021,Santagatieaap9646,kandala2017hardware,kandala2018extending,
	PhysRevX.8.031022,romero2017strategies,higgott2018variational,SuguruExc, mcclean2017hybrid,colless2017robust,kokail2018self,sharma2020noise,cerezo2020variational,SuguruGeneral, koczor2020quantum, koczor2020exponential}.
This results in a hybrid model with an iterative loop: a classical processor determines how to update the parameters describing a
family of quantum states (parametrised ansatz states), while a quantum coprocessor generates and performs measurements on that state
(via an ansatz circuit). This is of particular interest in the context of noisy, intermediate-scale quantum devices
(NISQ devices)~\cite{preskill2018quantum}, because complex ansatz states can be prepared with shallow  circuits~\cite{kassal2011simulating,C2CP23700H,whaley2014quantum,ourReview}.
Such shallow circuits will potentially enable obtaining useful value
before the era of resource-intensive quantum fault tolerance methods.
As such, variational quantum algorithms promise to solve key problems that are intractable to classical computers, such as finding ground
states \cite{peruzzo2014variational,PRXH2,mcclean2016theory,kandala2017hardware,google2020hartree}---as relevant
in quantum chemistry and in materials science---or approximately solving combinatorial problems \cite{farhi2014quantum} and beyond.

Despite their potential power, variational algorithms might require an extremely large
	number of quantum-circuit repetitions -- optimally using quantum resources will therefore have a crucial economic importance.
Attention has recently been focused on statistical aspects
of these variational quantum algorithms
\cite{sweke2019stochastic,kubler2019adaptive,qfi,Crawford2019,arrasmith2020operator,hadfield2020measurements},
such as the effect of shot noise and the reduction of their measurement costs.
It is our aim in this work to establish general scaling results by rigorously characterising
the number of measurements required to obtain a single iteration step
in case of so-called metric-aware quantum algorithms. Let us first
introduce basic notions.

\subsection{Variational quantum algorithms}
We consider variational quantum algorithms which typically aim to prepare
a parametrised quantum state $\rho(\underline{\theta})  := \Phi(\underline{\theta}) \, \rho_0$
where we model via a mapping $\Phi(\underline{\theta})$ that acts 
on the computational zero state $\rho_0$ of $N$ qubits
and depends continuously on the parameters $\theta_i$ with $i\in\{1, 2, \dots \nu \}$.
This mapping can in general
contain non-unitary elements, such as measurements \cite{koczor2019quantum, PhysRevLett.126.220501},
but in many applications one assumes that it acts (approximately) as a unitary circuit
that decomposes into a product of individual quantum gates. These gates typically
act on a small subset of the system, e.g., one and two-qubit gates.

Recently a novel variational algorithm was proposed for simulating real-time quantum evolution
using shallow quantum circuits \cite{Li2017} 
and was further generalised to imaginary time and natural gradient evolutions
\cite{samimagtime,koczor2019quantum} which can be used as optimisers of variational quantum eigensolvers (VQE)
	\cite{peruzzo2014variational,PRXH2,Rebentrost_2019, SuguruExc}.
This was shown to significantly outperform other approaches, such as simple gradient descent, in terms
of convergence speed and accuracy according to numerical simulations \cite{samimagtime,koczor2019quantum,wierichs2020avoiding}.

In this work, we consider generalisations of the aforementioned techniques as variational algorithms that
need to estimate the following two objects:
(a) a positive-semidefinite, symmetric matrix, which is usually the quantum Fisher information
that characterises sensitivity with respect to parameters $\theta_k$;
(b) a vector object that is in many applications the gradient vector of the loss function.
Examples of such algorithms are provided in references~\cite{li2017efficient,xiaotheory,samimagtime,koczor2019quantum,quantumnatgrad},
and we will refer to them in the following as metric-aware quantum algorithms.
The metric tensor typically only depends on the
parameter values while the vector object additionally depends on, e.g., a Hermitian observable
$\mathcal{H}$ that in typical scearios represents the Hamiltonian of a physical system
and decomposes into a polynomially increasing number $r_h$ of Pauli terms.

\subsection{Quantum natural gradient}
To be more concrete, in the following we will focus on one prominent algorithm, the recently
introduced quantum natural gradient approach \cite{koczor2019quantum,quantumnatgrad}
which is equivalent to imaginary time evolution when quantum circuits are noiseless and unitary \cite{koczor2019quantum,samimagtime}.
This approach can be used as a VQE optimiser when minimising the
expectation value $E(\underline{\theta}) :=\tr[ \rho(\underline{\theta}) \mathcal{H} ]$
over the parameters $\underline{\theta}$.
However, the approach generalises to any Lipschitz continuous mapping as an objective function \cite{koczor2019quantum}.

In particular, natural gradient descent governs the evolution of the ansatz parameters 
according to the update rule  \cite{koczor2019quantum}	
\begin{equation} \label{naturalgradEvoRESULUT}
\underline{\theta}(t{+}1) = \underline{\theta}(t) - \lambda \, \qf^{-1} \underline{g},
\end{equation}
where $t$ is an index and $\lambda$ is a step size.
Here the inverse of the positive-semidefinite, symmetric quantum Fisher information matrix $\qf \in \mathbb{R}^{\nu \times \nu}$
corrects the gradient vector $g_k := \partial_k E(\underline{\theta})$ for the co-dependence 
of the parameters, and both
objects can be estimated efficiently using a quantum computer while the inverse
$\qf^{-1}$  is computed by a classical processor.

We discuss different protocols for estimating the matrix $[\qf]_{kl}$ and vector $g_k$
entries for both pure (idealised, perfect quantum gates) and mixed quantum states
(via imperfect quantum gates or non-unitary elements as measurements) in the \summplmat.
We now highlight two results.
a) We derive the general upper bound $[\qf]_{kl} \leq r_g^2$, where $r_g$ is the maximal
number of Pauli terms into which generators of ansatz gates can be decomposed (Lemma~1). 
This bound is a generalisation of what is known
as the Heisenberg limit in quantum metrology \footnote{Where the ansatz parameter $\theta$ corresponds
	to a global $Z$ rotation of all the qubits and therefore $r_g = N$.
}, refer also to \cite{review,giovannetti11,koczor2019variational}.
b) The matrix $\qf$ might be ill-conditioned and the inversion in Eq.~\ref{naturalgradEvoRESULUT}
requires a regularisation.
We will use the simple variant of Tikhonov regularisation $\finv := [\qf {+} \eta \mathrm{Id}]^{-1}$
in the following; we derive analytical lower and upper bounds on the singular values
of this inverse matrix in the \summplmat~(Lemma~3) using a).

\section{Upper bounds on the measurement cost}
 To motivate our approach, we illustrate in Fig~\ref{fig1} (a/green)
how naively using the same number of measurements for estimating each matrix and vector
entry, such as in \cite{wierichs2020avoiding}, can result in impractical sampling costs.

In particular, we aim to reduce the error due to shot noise (finite sampling) $\epsilon$ of the vector
$\underline{v} := \finv \underline{g}$ in the update rule in Eq.~\eqref{naturalgradEvoRESULUT}.
We first express how the error in the matrix and vector entries
propagates to the parameter-update rule in Eq.~\eqref{naturalgradEvoRESULUT}.
We quantify this error as the expected Euclidean distance
$\langle \lVert \Delta v \rVert^2 \rangle  = \epsilon^2$, and
this translates to the condition $\sum_{k=1}^\nu \var[v_k]  = \epsilon^2$,
where $\var[v_k]$ is the variance of a single vector entry.

We derive an analytical formula in Lemma~2 in the \summplmat:
we express the error $\epsilon$ in terms of the variances $\var\{[\qf]_{k l}\}$ and $\var[g_l]$
of the measurements used to estimate
the matrix and vector entries, respectively, as
\begin{equation} \label{errorpropagation}
\epsilon^2 =  \sum_{k, l=1}^\nu a_{k l} \var\{[\qf]_{k l}\}
+ \sum_{k=1}^\nu b_k \var[g_k].
\end{equation}
The coefficients $a_{kl}$ and $b_{k}$ describe how the error of $[\qf]_{k l}$ and $g_k$ propagates through
matrix inversion and subsequent vector multiplication into the precision $\epsilon$.
We remark that these results are completely general and can be applied to any
quantum algorithm that requires the estimation of both an inverse matrix and a vector object,
such as a Hessian-based optimisation.

\addtocounter{footnote}{1}
\footnotetext[\value{footnote}]{
		In Fig.~1 in the uniformly distributed scenario the same number
		of measurements are used to determine every entry of the metric tensor (gradient
		vector) and only the overall number $\samplf$ ($\samplg$) of measurements to determine the
		metric tensor (gradient vector) is chosen optimally. In the naive scheme
		both $\samplf$  and $\samplg$ are additionally fixed
}
\newcounter{footcombined}
\setcounter{footcombined}{\value{footnote}}

\begin{figure*}[tb]
	\begin{centering}
		\includegraphics[width=\textwidth]{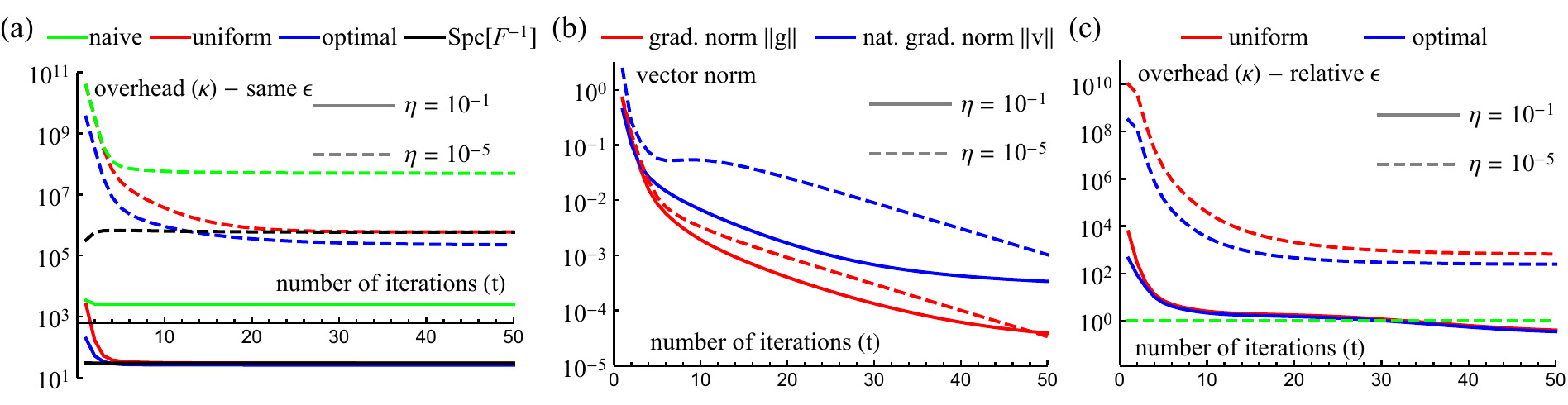}
		\caption{
			Exact numerical simulations: a $12$-qubit ansatz circuit with 84 parameters is
			initialised at a good approximation of the ground state
			of a spin-chain Hamiltonian (refer to \summplmat).
			Natural gradient evolution from Eq.~\eqref{naturalgradEvoRESULUT} was simulated with a regularisation parameter
			$\eta = 10^{-1}$ ($10^{-5}$), see solid (dashed) lines.
			(a) Measurement overhead $\kappa$ from Result~\ref{result1} (red~\cite{Note\thefootcombined}) at every iteration step $t$ of the natural
			gradient evolution. 
			This quantifies how much more it costs to estimate the natural gradient vector $\underline{v}(t)$
			than it would cost to estimate the gradient vector $\underline{g}(t)$ assuming the same precision $\epsilon$.
			$\kappa$ converges to its constant (black) asymptotic approximation.
			Optimally distributing measurements (blue) via Result~\ref{result3} significantly reduces sampling costs.
			However, naively (green~\cite{Note\thefootcombined}) using a fixed number of measurements for estimating each matrix and vector element
			results in a substantial overhead.
			(b) Multiplying $\underline{g}(t)$ (red) with the inverse of $\qf$ results in
			$\underline{v}(t)$ (blue) whose norm might be orders of magnitude larger.
			(c) In practice a relative precision is required, such that $\epsilon$ is proportional to the vector norms,
			refer to text.
			Carefully setting the regularisation parameter $\eta$ significantly improves the practical applicability:
			solid lines with $\eta = 10^{-1}$
			result in a sampling cost of $\underline{v}(t)$ comparable to (green shows $\kappa=1$) or even
			smaller than $\underline{g}(t)$.
			Refer to the main text for a remark about mitigating the initial high overheads seen in graphs (a) and (c).
			\label{fig1}
		}
	\end{centering}
\end{figure*}

We derive general upper bounds on the variances $\var\{[\qf]_{k l}\}$ and $\var[g_l]$ for different
experimental strategies in the \summplmat;
The error $\epsilon^{2}$ in Eq.~\ref{errorpropagation} is reduced proportionally when repeating measurements.
In the following, we assume that $\samplf$ measurements are assigned to estimate the full matrix $\qf$
while $\samplg$ measurements are used to estimate the gradient vector $\underline{g}$~\cite{Note\thefootcombined}.
We now state an upper bound on them in terms of the precision $\epsilon$.
\begin{theorem} \label{theo1}
	To reduce the uncertainty of the vector $\underline{v} = \finv \underline{g}$
	due to shot noise to a precision $\epsilon$,
	the number of samples to
	estimate the matrix $\qf$ in Eq.~\eqref{naturalgradEvoRESULUT} is upper bounded
	as 
	\begin{equation} \label{nfupperbound}
		\samplf  \leq   2 \,  \epsilon^{-2}  \nu^4 \, \spec[\finv]^2 \,  \lVert g \rVert_\infty^2 	 \, \fF
	\end{equation}
	while sampling the	gradient has a cost upper bounded by
 	\begin{equation}\label{thqeq2}
	 \samplg \leq 2 \epsilon^{-2} \,  \nu^2 \, \spec[\finv]  \,  \hamilnorm \fg.
	\end{equation}
	The overall measurement cost of determining the natural gradient vector is $\samplf {+} \samplg$.
	Here $\spec[A]$ denotes the average squared singular values of a matrix $A \in \mathbb{C}^{d \times d}$
	via its Hilbert-Schmidt or Frobenius norm as $\spec[A] := \lVert A \rVert^2/d$
	and  $\lVert g \rVert_\infty$ is the absolute largest
	entry in the gradient vector. 
\end{theorem}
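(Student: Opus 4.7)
The plan is to combine the error-propagation identity in Eq.~\eqref{errorpropagation} (Lemma~2) with the per-entry variance bounds derived in the \summplmat\ for the standard metric and gradient estimators, and then split the overall error budget $\epsilon^2$ evenly between the matrix and vector contributions. The latter split is precisely what produces the overall prefactor $2$ in Eqs.~\eqref{nfupperbound} and~\eqref{thqeq2}.

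The first step is to evaluate $a_{kl}$ and $b_k$ by first-order error propagation on $\underline{v}=\finv\,\underline{g}$. Using $\partial\finv/\partial[\qf]_{kl}=-\finv E_{kl}\finv$ one gets $\partial v_i/\partial g_k=[\finv]_{ik}$ and $\partial v_i/\partial[\qf]_{kl}=-[\finv]_{ik}v_l$; squaring and summing over $i$ collapses these into the closed forms $b_k=[\finv^{\,2}]_{kk}$ and $a_{kl}=[\finv^{\,2}]_{kk}\,v_l^2$, which telescope to
\begin{equation*}
\sum_k b_k=\tr[\finv^{\,2}]=\nu\,\spec[\finv],\qquad
\sum_{k,l}a_{kl}=\nu\,\spec[\finv]\,\lVert\underline{v}\rVert^2.
\end{equation*}

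Next I would translate $\lVert\underline{v}\rVert^2$ into the quantities appearing in the theorem. Submultiplicativity together with the standard inequality $\lVert\finv\rVert_{\mathrm{op}}^{2}\le\lVert\finv\rVert^{2}=\nu\,\spec[\finv]$ and the elementary $\lVert\underline{g}\rVert^2\le\nu\,\lVert\underline{g}\rVert_\infty^2$ give $\lVert\underline{v}\rVert^2\le\nu^{2}\,\spec[\finv]\,\lVert\underline{g}\rVert_\infty^2$ and hence $\sum_{k,l}a_{kl}\le\nu^{3}\,\spec[\finv]^{2}\,\lVert\underline{g}\rVert_\infty^{2}$. Substituting the per-entry single-shot variance estimates, which for the canonical protocols take the form $\var\{[\qf]_{kl}\}\le\nu\,\fF/\samplf$ and $\var[g_k]\le\nu\,\hamilnorm\,\fg/\samplg$, and imposing that each of the two contributions in Eq.~\eqref{errorpropagation} is at most $\epsilon^{2}/2$, then rearranges directly into Eqs.~\eqref{nfupperbound} and~\eqref{thqeq2}.

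The main obstacle is not the linear algebra, which is routine, but the careful bookkeeping of the $\nu$-scaling of the per-entry variances. Specifically, the canonical protocol distributes $\samplf$ across only $O(\nu)$ distinct circuit configurations that simultaneously determine all $\Theta(\nu^{2})$ symmetric matrix entries, and it is precisely this sharing of samples that converts a naive $\nu^{5}$ into the sharper $\nu^{4}$ prefactor appearing in Eq.~\eqref{nfupperbound}. The replacement of operator-norm quantities by the Frobenius-norm expression $\nu\,\spec[\finv]$ used above is justified by the explicit singular-value bounds on $\finv$ proved as Lemma~3 in the \summplmat, and the scalar dependence on $\hamilnorm$ in the gradient bound enters through the single-shot variance of Hamiltonian observables quantified there.
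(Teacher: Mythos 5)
Your overall skeleton (the error-propagation formula of Eq.~\eqref{errorpropagation}, uniform per-entry variance bounds, and an even split $\epsilon_F^2=\epsilon_g^2=\epsilon^2/2$ producing the factor $2$) is the same as the paper's, and your gradient bound is essentially the paper's proof: $\sum_k b_k=\lVert\finv\rVert^2=\nu\,\spec[\finv]$ together with $\var[g_k]\le\nu\,\samplg^{-1}\hamilnorm\fg$ gives Eq.~\eqref{thqeq2}. The matrix bound, however, has a genuine gap. Your single-stage delta method yields $a_{kl}=\bigl(\sum_i[\finv]_{ik}^2\bigr)\,v_l^2$, which is \emph{not} the coefficient of Lemma~\ref{propagationlemma}: the paper composes the Lefebvre inverse-matrix variance formula with an independent-entries product rule (thereby dropping covariances between entries of $\finv$), giving $a_{kl}=\bigl(\sum_i[\finv]_{ik}^2\bigr)\bigl(\sum_j[\finv]_{lj}^2 g_j^2\bigr)$. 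The difference matters quantitatively. From the paper's form one pulls $\lVert g\rVert_\infty^2$ out of the row-weighted sum and gets $\sum_{k,l}a_{kl}\le\lVert g\rVert_\infty^2\,\lVert\finv\rVert^4=\nu^2\spec[\finv]^2\lVert g\rVert_\infty^2$, with no detour through $\lVert v\rVert^2$. From your form the norm inequalities you invoke give only $\sum_{k,l}a_{kl}=\nu\,\spec[\finv]\,\lVert v\rVert^2\le\nu^3\spec[\finv]^2\lVert g\rVert_\infty^2$, and this extra factor of $\nu$ cannot be removed: $\lVert v\rVert^2$ can genuinely reach $\nu\,\spec[\finv]\cdot\nu\lVert g\rVert_\infty^2$ (e.g.\ $\finv$ close to a multiple of the all-ones matrix with a uniform $g$), in which case your coefficient sum exceeds $\lVert g\rVert_\infty^2\lVert\finv\rVert^4$ by a factor $\nu$. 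With the sampling scheme the theorem actually refers to (each of the $\nu^2$ entries receives $\samplf/\nu^2$ shots, single-shot variance at most $\fF$, hence $\var\{[\qf]_{kl}\}\le\nu^2\samplf^{-1}\fF$), your route therefore proves only $\samplf\le2\epsilon^{-2}\nu^5\spec[\finv]^2\lVert g\rVert_\infty^2\fF$, one power of $\nu$ weaker than Eq.~\eqref{nfupperbound}.

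You recover $\nu^4$ only by asserting that the per-entry variance scales as $\nu\,\fF/\samplf$ because $O(\nu)$ circuit configurations supposedly determine all $\Theta(\nu^2)$ entries simultaneously. That claim is not justified and is not how the estimators in the appendix work: the term $A_{kl}=\re\langle0|[D_k]^\dagger D_l|0\rangle$ requires a separate Hadamard-test circuit for each pair $(k,l)$ (only the $B_k$, $C_k$ pieces are shared), and the paper neither uses nor needs such sharing -- its $\nu^4$ comes from the structure of its $a_{kl}$ bound, not from sample reuse. In effect two compensating inaccuracies (an $a_{kl}$ that is too large by up to a factor $\nu$, and a per-entry variance that is too small by a factor $\nu$) conspire to reproduce the stated exponent, so the argument as written does not prove the theorem. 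The repair is to start from the coefficients of Lemma~\ref{propagationlemma} exactly as stated, bound $a_{kl}\le\lVert g\rVert_\infty^2\bigl(\sum_i[\finv]_{ik}^2\bigr)\bigl(\sum_j[\finv]_{lj}^2\bigr)$, and keep $\var\{[\qf]_{kl}\}\le\nu^2\samplf^{-1}\fF$; the algebra then closes exactly as in the paper.
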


The constant factors $\fF$ and $\fg$ in Theorem~\ref{theo1} are specific to the experimental setup
used to estimate the matrix or vector entries.
For example, for $r_g = 1$ the factor simplifies as $\fF \leq 2$.
The upper bounds in Theorem~\ref{theo1} crucially depend on the regularisation
and we prove that $\spec[\finv] \leq \eta^{-2}$, refer to Lemma~3 in the \summplmat.
The product $\hamilnorm \fg$ is a constant that reflects the complexity
of estimating the expected value of the fixed $\mathcal{H}$ (and can be reduced with advanced
techniques that simultaneously estimate commuting terms \cite{Crawford2019,yen2020measuring,jena2019pauli,gokhale2020n,gokhale2019minimizing,hadfield2020measurements}).
It is interesting to note that the sampling cost of the gradient vector $\samplg$
depends on the metric tensor via $\spec[\finv]$ (and vice versa). Let us illustrate this point
in an example where one of the entries in $\finv$
is extremely large in absolute value and therefore via the matrix/vector product it magnifies both the mean
and the variance of the gradient entries. Indeed, reducing such a magnified variance to our fixed precision
$\epsilon$ requires
an increased number of measurements in the gradient vector.
We finally remark that Theorem~\ref{theo1} is quite general and
the upper bounds apply to all metric-aware quantum algorithms
\cite{li2017efficient,xiaotheory,samimagtime,koczor2019quantum,quantumnatgrad}
up to minor modifications.

We will establish in the following, that in many cases sampling the gradient vector $\samplg$ dominates
the overall cost of the natural gradient approach as $\samplf{ +} \samplg \approx \samplg$.
Before doing so, 
let us first bound the sampling cost of the natural gradient vector \emph{relative to}
the sampling cost $\samplgd$ of the gradient vector that would be used in simple gradient descent
optimisations. Note that the difference between $\samplg$ and $\samplgd$
is that the latter corresponds to the scenario when we fix the metric tensor as
the identity matrix $F_Q:=\mathrm{Id_\nu}$
and thus the precision is $ \epsilon^2 := \langle \lVert \Delta g \rVert^2 \rangle $.

\begin{theorem} \label{theo2}
	Determining the natural gradient
	vector to the same precision $\epsilon$ as the gradient vector
	requires a sampling overhead $\kappa:= (\samplf{ +} \samplg)/\samplgd$.
	This overhead is upper bounded in general
	\begin{equation*}
		\kappa \leq \eta^{-2} +y , \quad \text{and }\quad 
		\kappa \approx \spec[\finv] +y,
	\end{equation*}
	up to the potentially vanishing term $y=\samplf / \samplgd$,
	as in Result~\ref{result1} and Result~\ref{result2}.
	Here  $\eta$ is either a regularisation parameter 
	or the smallest singular value of $\qf$.
	The second equality establishes
	an approximation as a constant factor which is valid,
	e.g.,
	when the evolution is close to the optimal point.
\end{theorem}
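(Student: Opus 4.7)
The plan is to separate $\kappa$ into the metric-tensor and gradient contributions and then bound the latter by direct comparison with the plain gradient-descent cost. I would start from
\begin{equation*}
\kappa \;=\; \frac{\samplf + \samplg}{\samplgd} \;=\; \frac{\samplg}{\samplgd} + y, \qquad y := \frac{\samplf}{\samplgd},
\end{equation*}
which isolates the non-trivial ratio $\samplg/\samplgd$ and bundles the matrix-sampling overhead into $y$. The behaviour of $y$, including conditions under which it vanishes, is deferred to Results~1 and~2 cited in the theorem statement, so I would leave it in that form.

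The core step is to match the two gradient-sampling budgets via Theorem~\ref{theo1}. Interpreting the upper bound there as the sample count the algorithm actually allocates in order to reach precision $\epsilon$, the natural-gradient case reads $\samplg \approx 2\epsilon^{-2}\,\nu^{2}\,\spec[\finv]\,\hamilnorm\,\fg$, whereas the plain gradient-descent case corresponds to formally setting $\qf = \mathrm{Id}_\nu$, so that $\spec[\finv]$ collapses to $\spec[\mathrm{Id}_\nu]=1$ and $\samplgd \approx 2\epsilon^{-2}\,\nu^{2}\,\hamilnorm\,\fg$. All prefactors cancel in the ratio, leaving $\samplg/\samplgd \approx \spec[\finv]$, which immediately yields the approximation $\kappa \approx \spec[\finv]+y$. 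The universal upper bound then follows by invoking Lemma~3 from the Appendix, which provides $\spec[\finv]\leq \eta^{-2}$ where $\eta$ is the Tikhonov parameter (or, in the unregularised limit, the smallest singular value of $\qf$); substituting gives $\kappa \leq \eta^{-2}+y$.

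The main obstacle I anticipate is the rigour of the ratio step: formally Theorem~\ref{theo1} only provides upper bounds on $\samplg$ and $\samplgd$ separately, and quotients of upper bounds are not in general upper bounds on quotients. The cleanest way to handle this is to read Theorem~\ref{theo1} as a \emph{sampling prescription} — the budget that the algorithm is instructed to allocate to guarantee precision $\epsilon$ — so the inequalities are saturated by construction and the ratio is exact up to the constants appearing in the appendix derivations. A secondary subtlety is that the word "approximation" in the second bound is meaningful only when $y$ is not dominant, which in practice occurs near the optimum where $\|\underline{g}\|$ is small; this is precisely the regime addressed by Results~1 and~2, on which the approximation claim relies.
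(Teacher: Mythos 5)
Your decomposition $\kappa = \samplg/\samplgd + y$ matches the paper's, but the way you obtain the two displayed relations inverts the paper's logic and leaves the general bound unproven. The paper works with the exact (uniformly distributed) sample counts $\samplg = \nu\,\epsilon_g^{-2}\sum_{l} b_l \var[g_l]$ and $\samplgd = \nu\,\epsilon_g^{-2}\sum_{l}\var[g_l]$, so that $\samplg/\samplgd$ is a variance-weighted average of the coefficients $b_l = \sum_{k}[\finv]_{kl}^2$. The unconditional bound then follows by bounding each weight individually, $b_l = \lVert \mathrm{Col}_l[\finv]\rVert^2 \leq \sigma_{\mathrm{max}}(\finv)^2 \leq \eta^{-2}$, which holds for arbitrary $\var[g_l]$. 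Only afterwards, under the extra assumption that $\var[g_l]$ is approximately independent of $l$ (e.g.\ near the optimum, where $\var[g_k]\to\hamilnorm$), does the ratio reduce to $\sum_l b_l/\nu = \spec[\finv]$, giving the approximate statement.

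In your proposal the general inequality $\kappa \leq \eta^{-2}+y$ is deduced from $\kappa \approx \spec[\finv]+y$ together with $\spec[\finv]\leq\eta^{-2}$ (Lemma~3). That makes the ``general'' bound contingent on the approximation step, which you yourself justify only by reading Theorem~1 as a saturated prescription (worst-case variances assigned to every entry) or by restricting to the near-optimum regime. If the actual variances $\var[g_l]$ are unequal, the ratio of the true sample requirements is the weighted average $\sum_l b_l\var[g_l]/\sum_l\var[g_l]$, which is controlled by $\max_l b_l$ and can exceed $\spec[\finv]$; so the inequality as you derive it is not established in general. The fix is exactly the paper's move: bound the individual $b_l$ by $\sigma_{\mathrm{max}}(\finv)^2\leq\eta^{-2}$ before any averaging, and reserve $\spec[\finv]$ for the approximate statement. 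Your ratio-of-Theorem-1-budgets argument is otherwise a reasonable heuristic for the $\approx$ part (the prefactor cancellation you invoke is consistent with the paper's convention of referring $\samplgd$ to the same gradient precision), but it cannot carry the unconditional bound.
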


\section{Scaling as a function of the iterations}
Theorem~\ref{theo1} establishes that the sampling cost $\samplf$ of the matrix $\qf$ depends
on the norm of the gradient vector, which is expected to decrease polynomially during an optimisation.
In a typical scenario we expect that, even if initially estimating the matrix dominates
the sampling costs, asymptotically 
sampling the vector $\underline{g}$ dominates the costs.
\begin{result}\label{result1}
	The upper bound in Theorem~\ref{theo1} results in the growth rate
	$\samplf + \samplg = \mathcal{O}(\lVert g(t) \rVert_\infty^2) + \samplg$
	when viewed as a function of iterations or steps $t$.
	Assuming polynomial convergence via $\lVert g(t) \rVert_\infty= \mathcal{O}( t^{-c})$
	with $c>0$, the natural gradient vector requires only a constant sampling overhead
	 asymptotically as
	\begin{equation*}
	\kappa =	(\samplf{+}\samplg)/\samplgd =  \mathcal{O}(\spec[\finv] +  t^{-2c} ),
	\end{equation*}
	when compared to the gradient vector via Theorem~\ref{theo2}.
	We remark that convergence is guaranteed under mild continuity conditions \cite{sweke2019stochastic}.
\end{result}

We have numerically simulated the natural gradient evolution from Eq.~\eqref{naturalgradEvoRESULUT}
and determined its overhead $\kappa$.
This quantifies how much more it costs
at every iteration step $t$ to estimate the natural gradient vector 
$\underline{v}(t)$ than it would cost to estimate the gradient vector
$\underline{g}(t)$ assuming the same precision $\epsilon$.
Fig~\ref{fig1} (a/red) shows how this sampling overhead converges to its constant asymptotic
approximation as the average squared singular values $\spec[\finv]  \approx 10^{6}$ ($10^{1.5}$)
in Fig~\ref{fig1} (a/black).
Fig~\ref{fig1} (a/dashed) also demonstrates that under-regularising the inverse (via $\eta = 10^{-5}$)
results in unfeasible sampling costs.
In fact, carefully increasing the regularisation parameter (as $\eta = 10^{-1}$) reduces the sampling cost by
several orders of magnitude without significantly affecting the performance:
both evolutions decrease the gradient norm with a similar rate,
compare solid and dashed red lines in Fig~\ref{fig1} (b).

It is striking that the overhead plotted in Fig~\ref{fig1} (a) can be very high initially;
while the focus of the present paper is on the asymptotic costs with respect to time and size,
it is worth noting that this high initial cost could be straightforwardly
mitigated by, e.g., only occasionally updating a low-rank approximation of the metric tensor.
This may be expected to have little impact on the convergence rate since in the early
phase the advantage of using natural gradient
is typically less pronounced.

\begin{figure*}[tb]
	\centering	
	\includegraphics[width=\textwidth]{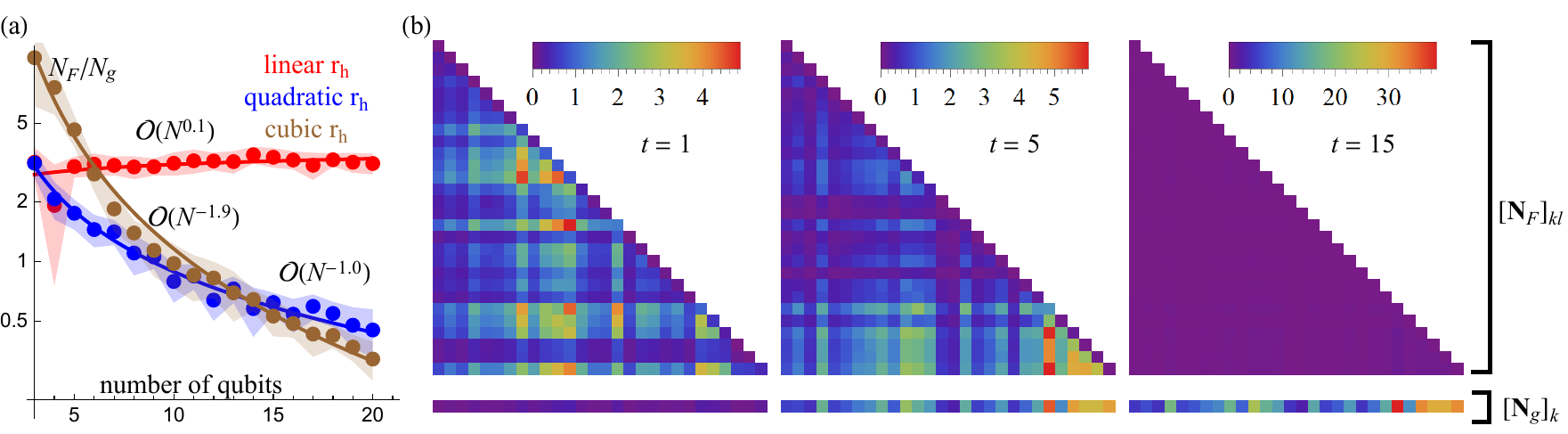} 	
	\caption{(a) Sampling cost $\samplf$ of the matrix $\qf$ relative
		to the sampling cost $\samplg$ of the gradient vector $\underline{g}$
		from Theorem~\ref{theo1} for an increasing number of qubits determined
		at randomly selected ansatz parameters with a fixed $\lVert \underline{g} \rVert = 0.1$.
		The relative sampling cost $\samplf/\samplg$
		vanishes asymptotically if the number $r_h$ of Pauli terms grows quadratically (blue), cubically (brown) or beyond as established in Eqs.~(\ref{qubitscaling1}-\ref{qubitscaling2}),
		refer also to Result~\ref{result2}.
		Shading represents the standard deviation, refer to \summplmat.
		(b)
		Color-maps showing the optimal number of measurements, $[\samplfmat]_{kl}$ and $[\samplgvec]_{k}$
		assigned to individual elements of the Fisher matrix and gradient respectively after  $1,5, 15$ iterations.
		For clarity here we have scaled the data by a constant factor such that if measurements were
		distributed uniformly than every matrix and vector element would receive $1$ measurement,
		i.e, we set the total number of measurements $\nopt$ to be the number of degrees of freedom. Most of the measurements are assigned
		to the gradient vector for an increasing number of iterations, as established in Result~\ref{result1}. 
	}	
	\label{fig2}
\end{figure*}

Recall that Fig~\ref{fig1} (a) via Result~\ref{result1} assumes a constant precision $\epsilon$
throughout the evolution which is not practical.
In fact, one would require a relative precision
such that $\epsilon =  \epsilon_0 \lVert  \underline{g}(t) \rVert $ in case of
the gradient vector and $\epsilon  =  \epsilon_0  \lVert  \underline{v}(t) \rVert $ 
in case of the natural gradient vector, for some fixed  $\epsilon_0$.
In particular, using a moderate regularisation of the inverse as $\eta = 0.1$,
the cost of estimating $\underline{v}(t)$ is comparable or even smaller
than estimating $\underline{g}(t)$, see [Fig~\ref{fig1} (c/red) solid].

We finally stress that in Fig.~\ref{fig1}(a) we do not actually compare the overall performance of
the simple and natural gradient methods, but only their per-iteration (per-epoch) costs. 
We therefore conclude that
the natural gradient
optimisation requires overall less samples to converge (i.e., asymptotically constant overhead but
faster convergence rate) when compared to simple gradient descent, see also \cite{samimagtime,koczor2019quantum,wierichs2020avoiding}. 
Moreover, we prove in the following that even the significant initial overheads in Fig~\ref{fig1} (a-c)
do in many practical applications  asymptotically vanish for an increasing number of qubits.

\section{Scaling with the system size}
Let us now consider how the upper bounds in Theorem~\ref{theo1}
scale with the number of qubits $N$.
First, we consider the general growth rate $\nu = \mathcal{O}(N a(N))$ of the number of parameters $\nu$ where
$a(N)$ is the depth of the ansatz circuit.
For example, $\mathrm{polylog}(N)$-depth circuits constitute a very general
class of ans{\"a}tze via $a(N) = \mathcal{O} (x \log(N)^y)$ for some $x,y>0$.
Second, we establish that 
the spectral quantity scales with the number of qubits as
$\spec[\finv] = \mathcal{O}(N^{-s} a^{-s}(N))$ with $0 \leq s \leq 2$, refer to Lemma~3 in the \summplmat.
Third, if $\mathcal{H}$ decomposes into a number $r_h$ of
	Pauli terms that grows polynomially (e.g., $N^4$ in case of chemistry applications) then
	we obtain a polynomial growth rate $\hamilnorm \fg = \mathcal{O}(N^b)$
	in Theorem~\ref{theo1} (Eq.~\eqref{thqeq2}) with some $b\geq1$.  We finally obtain the growth rates
\begin{align} \label{qubitscaling1}
\samplf  &= \mathcal{O}[   N^{4-2s} \, a^{4-2s}(N)  \,
\, \lVert g \rVert_\infty^2  ], \\
\samplg  &= \mathcal{O}[   N^{2-s+b} \,  a^{2-s}(N)] .
\label{qubitscaling2}
\end{align}
Note that the vector norm $\lVert g \rVert_\infty^2$ might in general also depend on the number of qubits,
e.g., exponentially vanishing gradients in case of barren plateaus~\cite{mcclean2018barren,grant2019initialization,cerezo2020cost}
which would result in an exponentially decreasing relative sampling cost of the metric tensor.
One may also think of scenarios where the gradient norm grows, however, one could then in practice decrease the inverse
precision $\epsilon^{-1}$ proportionally as typical at the initial stages of an optimisation.
To simplify our discussion, we assume that the gradient norm $\lVert g \rVert_\infty$
is fixed (bounded), e.g., the evolution is initialised
in a close vicinity of the optimal parameters as a good classical guess is known.
This also encompasses scenarios where the optimisation is near termination approaching
a fixed e.g., chemical precision.
We summarise the resulting measurement cost in the following.

\begin{result}\label{result2}
Assume that the number of Pauli terms in the Hamiltonian grows polynomially
implying $\hamilnorm \fg = \mathcal{O}(N^b)$ for some $b\geq 1$, and the gradient norm
$\lVert g \rVert_\infty = \mathcal{O}(N^{1})$ is bounded. 
The relative sampling cost of the matrix $\qf$ vanishes for general $\mathrm{polylog}(N)$-depth circuits
when $b > (2{-}s)$ and, following Theorem~\ref{theo2}, determining the natural gradient vector
requires at most a constant overhead asymptotically
\begin{equation*}
\kappa = (\samplf{+}\samplg)/\samplgd =  \mathcal{O}( \spec[\finv] + N^{2-b}  ),
\end{equation*}
when compared to the gradient vector.
\end{result}

Note that Result~\ref{result2} guarantees
a vanishing sampling cost of the matrix $\qf$
when the number of terms in the Hamiltonian
grows faster than quadratically, i.e.,  $b > 2$.
We have explicitly calculated the growth rates $b$
in case of 3 example Hamiltonians in the \summplmat\, as $b=1,2,3$, respectively,
and plot the relative sampling costs $\samplf/\samplg$ in Fig~\ref{fig2}.
We remark that this result can be applied to the
general class of metric-aware quantum
algorithms \cite{li2017efficient,xiaotheory,samimagtime,koczor2019quantum,quantumnatgrad}.

\section{Optimal measurement distribution}
So far we have assumed that $\samplf$ ($\samplg$) measurements are distributed uniformly 
among the $\nu^2$ ($\nu$) matrix (vector) entries~\cite{Note\thefootcombined}.
However, the overall number of samples $\samplf + \samplg$ (from Theorem~\ref{theo1}),
needed to obtain the vector $\underline{v} = \finv \underline{g} $ to a precision $\epsilon$,
can be minimised by distributing samples between the elements of $\qf$ and $\underline{g}$ optimally \cite{Crawford2019}.  We denote the matrix $[\samplfmat]_{kl}$ and the vector $[\samplgvec]_k$ entries that represent the number of measurements assigned to individual elements
in $\qf$ and in $\underline{g}$, respectively.
The number of samples required is reduced to
$\nopt = \Sigma^{2} / \epsilon^{2}$ with $\nopt \leq \samplf + \samplg$.
We now state explicit expressions for determining $\Sigma$,  $[\samplfmat]_{kl}$ and $[\samplgvec]_k$.

\begin{result}\label{result3}
	Measurements are distributed optimally when the number of samples for determining individual elements of the matrix and gradient
	are given by
	\begin{align}
		[\samplfmat]_{kl} &=   \epsilon^{-2} \, \Sigma \sqrt{a_{kl} \var \big \{[\qf]_{kl}\big\}}, \\
		[\samplgvec]_{k}  &=  \epsilon^{-2} \, \Sigma \sqrt{b_{k} \var [g_{k}]},
	\end{align}
	respectively. Here $\var[\cdot]$ is the variance of a single measurement of the corresponding element and we
	explicitly define $\Sigma$ via the coefficients $a_{kl}$ and $b_k$ as
	\begin{equation}
		\Sigma := \sum_{k,l=1}^{\nu} \sqrt{a_{kl} \var \big \{[\qf]_{kl}\big\}} + \sum_{k=1}^{\nu} \sqrt{b_k \var[g_k]}.
	\end{equation}
	Furthermore, the symmetry of the Fisher matrix can be explicitly included just by modifying the
	coefficients $a_{kl}$, as discussed in the \summplmat. 
\end{result}
We remark that this result is completely general and can be applied to any of the
metric-aware quantum algorithms \cite{li2017efficient,xiaotheory,samimagtime,koczor2019quantum,quantumnatgrad}.

[Fig. \ref{fig1} (a/c) blue] shows how the optimal distribution of samples
reduces the measurement overhead across the entire evolution
-- most significantly for small regularisation parameters [Fig. \ref{fig1} (a/c), $\eta = 10^{-5}$],
in which case some matrix elements might be crucially larger than others.
Moreover, result~\ref{result3} automatically takes into account the decreasing sampling cost
of the matrix as established in Results~\ref{result1}-\ref{result2}.
This is illustrated in Fig~\ref{fig2} (b);
For the first few iterations, far from convergence, the bulk
of the measurements are directed to the matrix, comparatively few go to the elements of the gradient
[Fig 2 (b), $t=1$]. However, close to convergence, consistent with Result~\ref{result1}, the gradient takes
the majority of the measurements, [Fig \ref{fig2} (b), $t=20$].

\section{Discussion and conclusion}
In this work we established general upper bounds on the sampling cost
of metric-aware variational quantum algorithms (e.g., natural gradient). We analysed how this sampling
cost scales for increasing iterations in Result~\ref{result1} and
for increasing qubit numbers in Result~\ref{result2}.
The latter establishes that the relative measurement cost of the matrix
object $\qf$ is asymptomatically negligible in many practically relevant scenarios,
such as in case of quantum chemistry applications.

Natural gradient has been shown to outperform other optimisation approaches in numerical
simulations~\cite{samimagtime,koczor2019quantum,wierichs2020avoiding}.
We proved in this work that for both an increasing number of iterations and number of qubits
the sampling overhead \emph{per-iteration (per-epoch)} of the natural gradient approach is constant asymptotically
when compared to simple gradient descent.
The most important implication of our results is therefore that
	the \emph{overall cost} of natural gradient is lower since it converges
	to the optimum faster.

We finally established a general technique that optimally distributes
measurements when estimating matrix and vector entries, further reducing the cost of
general metric-aware quantum algorithms. 
Let us finally remark on the generality of our results: our techniques
	are immediately applicable to other problems beyond metric-aware approaches,
	for example, to Hessian-based optimisations via	Eq.~\eqref{errorpropagation}
	as detailed in the Appendix.

\begin{acknowledgments}
Acknowledgements ---
B.\,K. acknowledges funding received from EU H2020-FETFLAG-03-2018 under the grant
agreement No 820495 (AQTION). The authors thank Simon C. Benjamin and Natalia Ares for their
support, stimulating ideas and useful comments on this manuscript.
Numerical simulations in this work used the QuEST and QuESTlink quantum
simulation packages \cite{quest,questlink}.
The authors would like to acknowledge the use of the University of Oxford Advanced
Research Computing (ARC) facility in carrying out this work.
We thank Patrick Coles and Andrew Arrasmith for their useful comments.
\end{acknowledgments}


%

\clearpage
\onecolumngrid

\iffalse
\begin{center}
	\textbf{\large Supplemental Materials: Measurement cost of metric-aware variational quantum algorithms}
\end{center}
\setcounter{equation}{0}
\setcounter{table}{0}
\setcounter{page}{1}
\makeatletter
\renewcommand{\theequation}{S\arabic{equation}}
\renewcommand{\thepage}{S\arabic{page}} 
\renewcommand{\thetable}{S\arabic{table}}  
\renewcommand{\thefigure}{S\arabic{figure}}
\else
\appendix
\fi

\section{Determining variances \label{determineVar}}

\subsection{Pauli decompositions \label{paulidecomps}}
Let us denote the set of Hermitian matrices of dimension $d$ as $\mathrm{Herm}[ \mathds{C}^{d \times d}]$.
The Hamiltonian $\mathcal{H} \in \mathrm{Herm}[ \mathds{C}^{d \times d}]$ of a qubit-system in general decomposes into
a sum over Pauli-operator strings via
\begin{equation} \label{hamildec}
	\mathcal{H} = \sum_{l=1}^{r_h} h_l P_l, \quad \quad \textrm{with} \quad \quad \mathbb{R} \ni h_l:=\tr[\mathcal{H} P_l]/d,
\end{equation}
where $P_l \in \mathrm{Herm}[ \mathds{C}^{d \times d}]$ are tensor products of single-qubit Pauli operators that act on an $N$-qubit system
and form an orthonormal basis of the Hilbert-Schmidt operator space, and $d=2^N$
is the dimensionality. We denote as $r_h \in \mathbb{N}$ the Pauli rank, i.e., the number of non-zero
Pauli components in the Hamiltonian. Note that in general $r_h \leq 4^N$.

In the following derivations we assume for simplicity that ansatz circuits $U_c$ are unitary
and decomposes into a product of individual gates
\begin{equation} \label{circuitEq}
	U_c(\underline{\theta}) = U_\nu(\theta_\nu) \dots  U_2(\theta_2) U_1(\theta_1),
\end{equation}
that typically act on a small subset of the system, e.g., one and two-qubit gates.
We assume in Eq.~\eqref{circuitEq} for ease of notation that each quantum
gate depends on an individual parameter $\theta_i$ with $i=\{1, 2, \dots \nu \}$.

Individual gates $U_k(\theta_k)  \in SU(d)$ of the quantum circuit from Eq.~\eqref{circuitEq} 
are in general of the form $U_k(\theta_k) := \exp[-i \theta_k G_k]$ 
and their generators $G_k \in \mathrm{Herm}[ \mathds{C}^{d \times d}]$ decompose into a sum of Pauli strings
resulting in
\begin{equation*}
	U_k(\theta_k) = \exp[-i \theta_k G_k] = \exp[-i \theta_k \sum_{l=1}^{r^{(k)}_g} g_{kl} P_l], \quad \quad 
	\text{with} \quad \quad
	\mathbb{R} \ni g_{kl}:=\tr[G_k P_l]/d 
\end{equation*}
and $r^{(k)}_g \in \mathbb{N}$ is the Pauli rank of the generator
$G_k$. We additionally assume that $g_{kl} \leq 1/2$ for simplicity -- but any other
upper bound could be specified.
It follows in general that the derivative $\partial_k U_k(\theta_k)$ decomposes
into a sum of $r^{(k)}_g$ unitary operators as
\begin{equation}\label{generalDeriv}
	\partial_k  U_k(\theta_k) = -i \sum_{l=1}^{r^{(k)}_g} g_{kl} P_l U_k(\theta_k).
\end{equation}
For ease of notation, in the following we consider circuits via Eq.~\eqref{circuitEq}
which decompose into gates $U_k(\theta_k)$ with Pauli rank $r_g = 1$.
This is naturally the case for a wide
variety of ansatz circuits, e.g., circuits that consist
of single-qubit rotations and two-qubit $ZZ$ or $XX$ evolution gates as depicted in Fig.~\ref{ansatzfig}.
This assumption results in a
simplified structure of the gates as $U_k(\theta_k) := \exp[-i \theta_k P_k/2]$ 
and their derivatives as
\begin{equation}\label{simpleDeriv}
	\partial_k  U_k(\theta_k) = -\tfrac{i}{2} P_k U_k(\theta_k),
\end{equation}
where $P_k$ is the Pauli generator of the gate $U_k(\theta_k)$.
This construction simplifies our following derivations, however, the generalisation to arbitrary parametrised gates
straightforwardly follows from linearity of Eq.~\eqref{generalDeriv}.

We finally define the partial derivative of the circuit in Eq.~\eqref{circuitEq} using our simplified ansatz as
\begin{equation} \nonumber
	D_k :=  2i \,  \partial_k U_c(\underline{\theta}) =  U_\nu(\theta_\nu) \dots  P_k U_k(\theta_k) \dots  U_2(\theta_2) U_1(\theta_1),
\end{equation}
which itself is unitary via $[D_k]^\dagger = [D_k]^{-1}$ 
(and we omit its explicit dependence on the parameters $\underline{\theta}$)
and $P_l P_l^\dagger = \mathrm{Id}_d$.

We remark that in case of non-unitary parametrisations one would need to consider the general mapping
$\rho(\underline{\theta})  := \Phi(\underline{\theta}) \, \rho_0$. The circuit derivative then decomposes into Pauli terms  as
\begin{equation}
	\partial_k \rho(\underline{\theta})  = \sum_{m,n=1}^{r^{(k)}_p}   p_{kmn} P_m \rho(\underline{\theta}) P_n.
\end{equation}

\subsection{Upper bound on the quantum Fisher information}
We now derive a general upper bound on the quantum Fisher information for unitary parametrisations.
\begin{lemma} \label{qfilemma}
	In case of unitary ansatz circuits that act on arbitrary quantum states $\rho$ via quantum gates
	that decompose into at most $r_g$ Pauli terms, entries of the quantum
	Fisher information matrix are upper bounded as $[\qf]_{kl} \leq r_g^2$.
\end{lemma}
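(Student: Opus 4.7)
The plan is to reduce the question to bounding a variance of an effective generator, and then to exploit the Pauli decomposition of $G_k$. First, differentiating the circuit from Eq.~\eqref{circuitEq} gives $\partial_k \rho(\underline{\theta}) = -i[A_k, \rho(\underline{\theta})]$, where $A_k := V_k G_k V_k^\dagger$ and $V_k$ is the unitary composed of the gates appearing after $U_k(\theta_k)$. For a pure state $\rho = |\psi\rangle\langle \psi|$ a direct calculation (via $|\partial_k \psi\rangle = -iA_k|\psi\rangle$) gives the identity $[\qf]_{kk} = 4\,\var_\psi(A_k)$; for mixed states I would use only the inequality $[\qf]_{kk} \leq 4\,\var_\rho(A_k)$, which follows from convexity of the QFI together with concavity of the quantum variance: for any pure-state decomposition $\rho = \sum_i p_i |\psi_i\rangle\langle\psi_i|$,
\begin{equation*}
[\qf]_{kk}(\rho) \leq \sum_i p_i\,[\qf]_{kk}(|\psi_i\rangle\langle\psi_i|) = 4\sum_i p_i\,\var_{\psi_i}(A_k) \leq 4\,\var_\rho(A_k).
\end{equation*}

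Second, I would bound the variance by the operator norm using $\var_\rho(A) \leq (\lambda_{\max}-\lambda_{\min})^2/4 \leq \lVert A\rVert_\infty^2$, valid for any Hermitian $A$. Because $A_k$ is unitarily conjugate to $G_k$ we have $\lVert A_k\rVert_\infty = \lVert G_k\rVert_\infty$, and the Pauli decomposition from Sec.~\ref{paulidecomps} together with $\lVert P_l\rVert_\infty = 1$ and the standing normalisation $|g_{kl}| \leq 1/2$ yields by the triangle inequality
\begin{equation*}
\lVert G_k\rVert_\infty \leq \sum_{l=1}^{r^{(k)}_g} |g_{kl}|\,\lVert P_l\rVert_\infty \leq r_g/2.
\end{equation*}
Combining the two estimates gives the diagonal bound $[\qf]_{kk} \leq 4\cdot(r_g/2)^2 = r_g^2$. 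The off-diagonal bound then follows at once from positive semi-definiteness of $\qf$: the principal-minor form of Cauchy--Schwarz gives $|[\qf]_{kl}| \leq \sqrt{[\qf]_{kk}\,[\qf]_{ll}} \leq r_g^2$.

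The main technical point is the mixed-state step, where the clean identity $[\qf]_{kk} = 4\,\var_\rho(A_k)$ fails; one has to argue via convexity of the QFI combined with concavity of the variance to recover the desired inequality, rather than manipulate the symmetric logarithmic derivative directly. All remaining steps are elementary operator-norm estimates and a single application of Cauchy--Schwarz. As a sanity check, the global $Z$-rotation scenario with $G = \tfrac{1}{2}\sum_i Z_i$ has $r_g = N$ and the bound reproduces the Heisenberg limit $[\qf]_{kk}\leq N^2$ of quantum metrology, as noted in the main text.
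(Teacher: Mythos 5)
Your proposal is correct, and it reaches the bound by a route that differs from the paper's in its technical machinery even though the core idea (reduce to the diagonal, exploit the rank-$r_g$ Pauli decomposition with $|g_{kl}|\leq 1/2$, then transfer to off-diagonals by Cauchy--Schwarz) is the same. The paper works directly with the derivative state: it asserts that for unitary circuits the QFI is maximised on pure states, writes $[\qf]_{kl}=2\,\tr[(\partial_k\rho)(\partial_l\rho)]$ for $\rho=|\psi\rangle\langle\psi|$, applies Cauchy--Schwarz in the Hilbert--Schmidt inner product to reduce to the diagonal, and bounds $\langle\partial_k\psi|\partial_k\psi\rangle=\sum_{l,m}g_{kl}g_{km}\langle\psi_l|\psi_m\rangle\leq r_g^2/4$ term by term. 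You instead package the derivative as an effective generator $A_k=V_kG_kV_k^\dagger$, identify $[\qf]_{kk}=4\,\var_\psi(A_k)$, and bound the variance by $\lVert A_k\rVert_\infty^2=\lVert G_k\rVert_\infty^2\leq (r_g/2)^2$ via the triangle inequality; your off-diagonal step uses positive semidefiniteness of $\qf$ (nonnegativity of $2\times 2$ principal minors) rather than the pure-state trace formula. Two features of your version are genuine improvements in rigour and generality: you actually prove the mixed-state reduction (convexity of the QFI over a $\theta$-independent pure-state decomposition, which is exactly the situation for a unitary ansatz acting on arbitrary $\rho$, plus concavity of the variance --- though the latter is dispensable since $\var_{\psi_i}(A_k)\leq\lVert A_k\rVert_\infty^2$ already suffices), whereas the paper only asserts maximality on pure states; and your principal-minor Cauchy--Schwarz applies verbatim to mixed states, while the paper's off-diagonal estimate is phrased through the pure-state formula. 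The paper's argument, in turn, stays closer to the experimentally measured quantities (the overlaps $\langle\psi_l|\psi_m\rangle$ that reappear in its variance estimates), but your operator-norm route avoids the slightly loose handling of possibly negative coefficients and complex overlaps in the sum $\sum_{l,m}g_{kl}g_{km}\langle\psi_l|\psi_m\rangle$.
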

\begin{proof}
	When the ansatz circuit consists of unitary gates, the quantum Fisher information assumes its maximum for pure states.
	Considering the pure state	$\rho=| \psi \rangle \langle  \psi |$, it follows from \cite{koczor2019quantum} that
	\begin{equation*}
		[\qf]_{kl} = 2\, \tr[(\partial_k \rho)(\partial_l \rho)].
	\end{equation*}
	Applying the Cauchy–Schwarz inequality yields
	\begin{equation*}
		2 \tr[(\partial_k \rho)(\partial_l \rho)] \leq 2 \sqrt{ \tr[(\partial_k \rho)(\partial_k \rho)] \, \tr[(\partial_l \rho)(\partial_l \rho)]} \leq  F_{max}
	\end{equation*}
	where $F_{max}$ is a bound on the scalar quantum Fisher information, i.e., diagonal entries of the matrix $\qf$.
	Let us determine this bound via
	\begin{equation} 
		[\qf]_{kk}  = 4 \mathrm{Re}[\langle \partial_k \psi | \partial_k \psi \rangle ]
		- 4|\langle \partial_k \psi | \psi \rangle|^2
		\leq  4 \mathrm{Re}[\langle \partial_k \psi | \partial_k \psi \rangle ] = 4 \langle \partial_k \psi | \partial_k \psi \rangle
	\end{equation}
	for an arbitrary $|  \psi \rangle $.
	It follows from Eq.~\eqref{generalDeriv} that 
	\begin{equation}
		\langle \partial_k \psi | \partial_k \psi \rangle =  \sum_{l,m=1}^{r^{(k)}_g} g_{kl} \, g_{km} \langle  \psi_l | \psi_m \rangle
		\leq (r_g)^2/4,
	\end{equation}
	where $| \psi_m \rangle$ are some valid, normalised states and therefore
	$\langle  \psi_l | \psi_m \rangle \leq 1$ and we used that $g_{kl} \leq 1/2$.
	This finally establishes the general upper bound for unitary ansatz circuits whose gates decompose
	into at most $r_g$ Pauli terms as
	\begin{equation*}
		[\qf]_{kl} \leq r_g^2
	\end{equation*}
	and in case of simplified ans{\"a}tze with $r_g=1$ from Sec.~\ref{paulidecomps} one obtains $[\qf]_{kl} \leq 1$.
	
\end{proof}

\subsection{Components of the gradient \label{gradsec}}

Components of the gradient vector can be measured via Hadamard test.
We discuss this on the example of simplified ans{\"a}tze from Sec.~\ref{paulidecomps},
while the generalisation follows from linearity.
Let us first express the gradient components $g_k:=\partial_k E(\underline{\theta})$
in terms of the derivative circuits from Eq.~\ref{simpleDeriv} as
\begin{align} \nonumber
	g_k  &=  - \im[
	\langle 0 | \, [D_k]^\dagger  \, \mathcal{H} \,U_c \, | 0 \rangle]= - \sum_{l=1}^{r_h} h_l M_{kl},
\end{align}
where the second equation
uses the decomposition of the Hamiltonian into Pauli operators from Eq.~\eqref{hamildec}
via denoting the matrix elements $M_{kl}:=\im\langle 0 | \, [D_k]^\dagger  \, P_l  \,U_c \, | 0 \rangle$.
These matrix elements can be estimated by using an ancilla qubit via the circuits in Fig.~2 of reference \cite{Li2017}
and the corresponding proof can be found in footnote [53] of \cite{Li2017}, refer also to \cite{xiaotheory}. The probability $p$ of measuring this ancilla qubit in the
$|\pm\rangle$ basis with outcome $+1$ determines the matrix elements via
$(2p_{kl} {-}1)=M_{kl}$ for every Pauli component in the Hamiltonian $P_l$.
This finally yields the explicit form of the gradient vector
\begin{equation}
	g_k  =\partial_k E(\underline{\theta}) =- \sum_{l=1}^{r_h} h_l (2p_{kl} {-}1)
\end{equation}
in terms of the measurement probabilities $0 \leq p_{kl} \leq 1$.
Note that each probability $p_{kl} $ is estimated by sampling a binomial
distribution which has a variance $\sigma^2_{kl}  = p_{kl}  (1- p_{kl} )$.
It follows that the variance of the gradient components
are determined by these individual variances via
\begin{equation}\label{gradvariance}
	\var[g_k]  = 4 \sum_{l=1}^{r_h} h_l^2 \, \sigma^2_{kl}  =4 \sum_{l=1}^{r_h} h_l^2 \, p_{kl}  (1- p_{kl} ).
\end{equation}
Re-expressing this variance in terms of the matrix elements via
$p_{kl}= (M_{kl} {+}1)/2$ yields the simplified form
\begin{equation}\label{gradientVar}
	\var[g_k]  = \sum_{l=1}^{r_h} h_l^2 \,   (1{-}[M_{kl}]^2).
\end{equation}
This expression is related directly to the parametrised quantum state 
$| \psi(\underline{\theta}) \rangle $ via the expectation value as
$M_{kl} = - 2 \re\langle \partial_k \psi(\underline{\theta}) | P_l | \psi(\underline{\theta}) \rangle $.
In complete generality, i.e., when gates decompose into a linear
combination of at most $r_g$ Pauli terms, the variance of the gradient entries is upper bounded
(via Eq.~\eqref{gradvariance}) as
\begin{equation} \label{gradVarUB}
	\var[g_k]  \leq r_g \sum_{l=1}^{r_h} h_l^2 =  r_g \hamilnorm,
\end{equation}
where $\hamilnorm $ follows from  the Hilbert-Schmidt scalar
product as
\begin{equation*}
	\hamilnorm := \lVert \mathcal{H} \rVert^2/d :=\tr[\mathcal{H} \mathcal{H}]/d = \sum_{k,l=1}^{r_h} h_k h_l \tr[ P_k  P_l]/d = \sum_{l=1}^{r_h} h_l^2. 
\end{equation*}
via Eq.~\eqref{hamildec} and recall that $\tr[ P_k  P_l] = d\, \delta_{kl}$, where $\delta_{kl}$ is the Kroenecker delta
and $d=2^N$.

So far we have assumed that each term in the Hamiltonian is estimated separately from outcomes
of independent ancilla measurements and the above variance therefore corresponds to overall
$r_h$ measurements. Indeed, advanced techniques could be used for simultaneously measuring commuting terms
in the Hamiltonian (possibly without an ancilla qubit) reducing the overall number of shots \cite{Crawford2019,yen2020measuring,jena2019pauli,gokhale2020n,gokhale2019minimizing,hadfield2020measurements}
and we would like to take this into account in our final result.
We conclude by stating the upper bound on the variance $\var[g_k]$
of a single measurement to estimate the gradient entry $g_k$ as
\begin{equation}\label{Nk_upper_bound}
	\var[g_k] \leq \hamilnorm \fg .
\end{equation}
Here we have introduced the constant factor $\fg$.
We can generally state the bounds $1 \leq \fg \leq r_g r_h$ as $\fg$ depends on the system (type of gates via $r_g$)
and on the measurement technique used for estimating terms in the Hamiltonian (number of commuting groups).
Here the lower bound (best case scenario $ \fg = 1$)
is saturated for Pauli gates ($r_g=1$) and Hamiltonians from Eq.~\eqref{hamildec} in which all terms commute and are measured simultaneously.
The upper bound (worst case scenario) is saturated by Hamiltonians from  Eq.~\eqref{hamildec} in which all $r_h$ terms are estimated
from separate measurements (all terms are non-commuting) and all terms have comparable strengths (optimally
distributing samples does not reduce $\var[g_k]$).
The factor $\fg$ interpolates between these two extremal cases and will correspond to a value in the bounded rage
$1 \leq \fg \leq r_g r_h$. In most of this work we assume a fixed $\mathcal{H}$ and therefore we can treat $\fg$ and $\hamilnorm$
as constants. The only exception is our derivation in Result~2 where we make the mild, general assumption that
the number $r_h$ of terms in the Hamiltonian grows polynomially and therefore necessarily
the product $\hamilnorm \fg = \mathcal{O}(N^b)$ grows in some
polynomial order $b$ with the number of qubits $N$.
To illustrate this, we construct 3 example Hamiltonians in Sec.~\ref{appendix:simulations} and explicitly
compute the polynomial order $b$ in which the cost $\hamilnorm \fg$ of estimating $g_k$ grows with the number of qubits $N$.

Let us now consider mixed quantum states, e.g., due to gate imperfections, via the eigendecomposition $\rho = \sum_{n} p_n |\psi_n \rangle \langle \psi_n | $.
If the parametrisation $\underline{\theta}$ is approximately unitary via $\tfrac{\partial p_n}{\partial \theta_k} \approx 0$,
then gradient components of the expectation value $\tr[ \rho(\underline{\theta}) \mathcal{H} ]$ can be expressed as
\begin{equation}
	\tfrac{\partial }{\partial \theta_k} \tr[ \rho(\underline{\theta}) \mathcal{H} ] 
	\approx \sum_{n} p_n  \tfrac{\partial }{\partial \theta_k}  [\langle \psi_n(\underline{\theta})  | \mathcal{H} |\psi_n(\underline{\theta})  \rangle ]
	= \sum_{n} p_n  [g_{k}]_n
\end{equation}
where $ [g_{k}]_n$ is the gradient that would be measured by the above protocol for the
pure eigenstate $|\psi_n(\underline{\theta})  \rangle$. The above discussed protocol therefore
estimates the correct gradient for mixed states -- as long as the parametrisation is approximately
unitary, such as in case of noisy gates. The same upper bound holds for the variances via $\sum_{n} p_n = 1$ and $0\leq p_n\leq 1$,
and the bound is only saturated by pure states.

In summary, the variance of the gradient entries is upper bounded as $\var[g_k]  \leq   \hamilnorm  \fg$, where
$\fg$ is a constant factor that only depends on the ansatz structure, on the particular quantum algorithm
that is used to estimate the entries and on the Hamiltonian. We remark that the above discussed protocol is used
in other metric-aware quantum algorithms, and our bounds therefore
apply to other vector objects used in these algorithms~\cite{li2017efficient,xiaotheory,samimagtime,koczor2019quantum,quantumnatgrad}.

\subsection{Components of the quantum Fisher information matrix}
We will now focus on determining variances of the quantum Fisher information
entries $[\qf]_{kl} $.
For pure states as $\rho=| \psi \rangle \langle  \psi |$,
entries of the quantum Fisher information can be expressed via the state-vector scalar products \cite{koczor2019quantum}
\begin{equation} \label{qfidef}
	[\qf]_{kl}  = 4 \mathrm{Re}[\langle \partial_k \psi | \partial_l \psi \rangle 
	- \langle \partial_k \psi | \psi \rangle \langle  \psi | \partial_l \psi \rangle],
\end{equation}
The second term in the above equation vanishes when
the global phase evolution of $| \psi \rangle$ is zero \cite{xiaotheory}
and an experimental protocol for measuring the remaining component
$\mathrm{Re}\langle \partial_k \psi | \partial_l \psi \rangle $
was used in \cite{samimagtime} for simulating imaginary time evolution.

We now propose a protocol that determines both
terms in Eq.~\ref{qfidef}.
Assuming the simplified ansatz from Sec.~\ref{paulidecomps},
our protocol allows to evaluate the coefficients by measuring an ancilla qubit 
\begin{align*}
	A_{kl} =  4 \re \langle \partial_k \psi | \partial_l \psi \rangle 
	&= \re \langle 0 | [D_k]^\dagger D_l| 0\rangle 
	= 2[p_a]_{kl} {-}1,\\
	B_k = 2 \re  \langle \partial_k \psi | \psi \rangle
	&=\re \langle 0 | [D_k]^\dagger U_c| 0\rangle 
	= 2[p_b]_k {-}1,\\
	C_k = 2 \im \langle \partial_k \psi | \psi \rangle
	&=\im \langle 0 | [D_k]^\dagger U_c| 0\rangle 
	= 2[p_c]_k {-}1,
\end{align*}
using the circuits in Fig.~2 of reference \cite{Li2017},
refer to footnote [53] of \cite{Li2017} for a proof.
These circuits allow for estimating the probabilities $p_a$, $p_b$ and $p_c$
by sampling the ancilla qubit as a binomial distribution.
The quantum Fisher information is then obtained as
\begin{equation*} 
	[\qf]_{kl}  =  A_{kl} + B_k B_l -  C_kC_l
	= (2[p_a]_{kl} {-}1)+  (2[p_b]_k {-}1) (2[p_b]_l {-}1)
	- (2[p_c]_k {-}1)(2[p_c]_l {-}1).
\end{equation*}
Since the probabilities $p_a$, $p_b$ and $p_c$ are determined from
binomial distributions, their variances are given by, e.g.,
$[\sigma_a^2]_{kl}  = [p_a]_{kl}  (1- [p_a]_{kl})$.
It follows that
\begin{equation*} 
	\var\{ [\qf]_{kl} \} = 4 [\sigma_a^2]_{kl} + 4 [\sigma_b^2]_{k} B_l^2  + 4 [\sigma_b^2]_{l} B_k^2
	+  4[\sigma_c^2]_{k} C_l^2  + 4 [\sigma_c^2]_{l} C_k^2,
\end{equation*}
Substituting $4 [\sigma_b^2]_{k} = (1-[B_l]^2)$ and $4 [\sigma_c^2]_{k} = (1-[C_l]^2)$,
we can express the variances as
\begin{equation*} 
	\var\{ [\qf]_{kl} \} = (1 {-}[A_{kl}]^2) + (1-[B_k]^2) B_l^2  + (1-[B_l]^2) B_k^2 +  (1-[C_k]^2) C_l^2  + (1-[C_l]^2) C_k^2,
\end{equation*}
in terms of the estimated quantities $A_{kl}$, $B_k$ and $C_k$,
and we used the expressions, e.g., $(A_{kl}\ {+}1)/2 = [p_a]_{kl} $.

Note that the inequality $(1-[B_k]^2) B_l^2 \leq 1/4$ is saturated when $B_k=1/\sqrt{2}$ and in general
$ |A_{kl}|, |B_l|, |C_l| \leq 1$. Using this inequality we can  establish the general upper bound
\begin{equation}\label{qfVarUB2}
	\var\{[\qf]_{kl}\}  \leq  2 r_g^2 ,
\end{equation}
when gates decompose into a linear combination of at most $r_g$ Pauli terms.

When assuming noisy unitary circuits, Result~3 in \cite{koczor2019quantum} establishes
that $[\qf]_{kl} \approx 2\, \tr[(\partial_k \rho)(\partial_l \rho)]$
and the approximation becomes exact for pure states as $\rho=| \psi \rangle \langle  \psi |$.
The Hilbert-Schmidt scalar products $\tr[(\partial_k \rho)(\partial_l \rho)]$ can be measured using the 
circuit based on SWAP tests from \cite{xiaotheory} and one can directly estimate the quantity $[\qf]_{kl}  = (2p_{kl}-1)$
by measuring the probability $p_{kl}$ of an ancilla qubit
in case when using the simplified ansatz from Sec.~\ref{paulidecomps}, i.e., 
when gates decompose into single Pauli terms.
We remark that this implementation requires more qubits when compared to the above introduced
pure-state approach. However, it is preferable as it results in negligible approximation
errors when gates are imperfect, refer to \cite{koczor2019quantum}. 
The variance follows as
$\var\{[\qf]_{kl}\}  = 4 p_{kl} (1-p_{kl}) = (1 - [\qf]_{kl}^2) \leq 1$ in case
of the simplified ansatz from Sec.~\ref{paulidecomps} and we have used $[\qf]_{kl} \leq 1$
from Lemma~\ref{qfilemma}.

In complete generality, i.e., when gates decompose into a linear
combination of at most $r_g$ Pauli terms, the variance of the matrix entries is upper bounded as
\begin{equation} \label{qfVarUB1}
	\var\{[\qf]_{kl}\}  \leq r_g^2.
\end{equation}

In summary, the variance of the matrix entries are upper bounded as $\var\{[\qf]_{kl}\}  \leq \fF$, where
$\fF$ is a constant factor that only depends on the ansatz structure and the approach used to estimate the matrix entries.
We remark that the above discussed two protocols are used in other metric-aware quantum algorithms and our bounds therefore
apply to other matrix objects estimated by these algorithms~\cite{li2017efficient,xiaotheory,samimagtime,koczor2019quantum,quantumnatgrad}.

\subsection{Numerical simulations \label{appendix:simulations}}

In our numerical simulations we use the ansatz illustrated in Fig.~\ref{ansatzfig}.
This decomposes into repeated blocks. The first block $B_1$ consists of single-qubit
$X$ rotations while the second block $B_2$ decomposes into nearest-neighbour Pauli $ZZ$ 
gates followed by single qubit $Y$ and $X$ rotations. Each gate depends on an individual parameter
$\theta_k$ with $k\in \{1 \dots \nu\}$. In our numerical simulations we use the ansatz structure
$B_1 B_2 B_2$ which has a linearly growing number of parameters $\nu = \mathcal{O}(N)$ in the number of qubits
via the constant depth $a(N) = \mathcal{O}(N^0)$.

In Fig.~1 we simulate the natural gradient approach for finding the ground state energy of
the spin-chain Hamiltonian
\begin{equation} \label{hamil} 
	\mathcal{H} = 
	\sum_{i=1}^{N-1} J [ 
	\sigma_x^{\{i\}} \sigma_x^{\{i+1\}}
	+ \sigma_y^{\{i\}} \sigma_y^{\{i+1\}} 
	+ \sigma_z^{\{i\}} \sigma_z^{\{i+1\}} 
	]
	+  J[ 
	\sigma_x^{\{1\}} \sigma_x^{\{N\}}
	+ \sigma_y^{\{1\}} \sigma_y^{\{N\}} 
	+ \sigma_z^{\{1\}} \sigma_z^{\{N\}} 
	]
	+\sum_{i=1}^N  \omega_i \, \sigma_z^{\{i\}}.
\end{equation}
which contains identical couplings $xx$, $yy$ and 
$zz$ between nearest neighbours with a constant which
we set $J=1$.
Here $\sigma_\alpha^{\{k\}}$ represent Pauli matrices 
acting on qubit $k$ with $\alpha = \{x,y,z\}$.
We select on-site frequencies $\omega_i$
randomly according to a uniform distribution with values varying between
$-1$ and $1$. The resulting Hamiltonain has a non-trivial, highly
entangled ground state that we aim to approximate using the
(not necessarily optimal) ansatz circuit shown on Fig.~\ref{ansatzfig}.
We initialise the optimisation at a point in parameter space close to the optimum
and we set the step size as $\lambda = 0.2$.

In Fig.~2 we simulate various different Hamiltonians using
the same technique.
In particular, we use Eq.~\eqref{hamil} as the linearly scaling Hamiltonian in Fig.~2 (red).
We define the quadratically scaling Hamiltonian Fig.~2 (blue) as 
\begin{equation} \label{hamil2} 
	\mathcal{H} = 
	\sum_{k>l=1}^{N} J [ 
	\sigma_x^{\{k\}} \sigma_x^{\{l\}}
	+ \sigma_y^{\{k\}} \sigma_y^{\{l\}} 
	+ \sigma_z^{\{k\}} \sigma_z^{\{l\}} 
	]
	+\sum_{k=1}^N  \omega_k \, \sigma_z^{\{k\}},
\end{equation}
while we chose the cubically scaling Hamiltonian Fig.~2 (brown) as
\begin{equation} \label{hamil3} 
	\mathcal{H} = 
	\sum_{l>k}^{N} \sum_{m>l}^{N} J 
	\sigma_x^{\{k\}} \sigma_y^{\{l\}} \sigma_z^{\{m\}}
	+\sum_{k=1}^N  \omega_k \, \sigma_z^{\{k\}}.
\end{equation}
In our simulations we start the optimisation at a random initial
point in parameter space, i.e., $\underline{\theta}$ is selected randomly, and run the optimisation until the gradient
vector is such that $\lVert v\rVert \approx 10^{-1}$. This ensures that the we approximately randomly select
points in parameter space for which the gradient norm is fixed, hence satisfying our assumption in Result~2.
We compute the values of $\samplf$ and $\samplg$ at  $25$ instances of such randomly selected ansatz parameters.
Dots (shading) [solid lines] Fig.~2 shows the average (standard deviation) [fitting] of the ratio $\samplf/\samplg$.

Let us now compute how the product $\hamilnorm \fg =  \mathcal{O}(N^b)$ from Result~2 
(which reflects the cost of estimating a gradient entry $g_k$ via Eq.~\eqref{Nk_upper_bound}) grows with the number of qubits.
Terms in the above Hamiltonians can be grouped into a constant number of commuting groups 
which can be measured simultaneously and therefore  $\fg = \mathcal{O}(1)$ in Eq.~\eqref{Nk_upper_bound}.
Furthermore, the squared sum of the coefficients grows as 
$\hamilnorm = \sum_{l=1}^{r_h} h_l^2 = \mathcal{O}(N^b)$
with $b=1,2,3$, respectively.
We therefore conclude that our upper bound in Eq.~\eqref{Nk_upper_bound} grows as
$\var[g_k] \leq \hamilnorm \fg = \mathcal{O}(N^b)$ and, indeed, 
the product in Result~2 grows as $\hamilnorm \fg = \mathcal{O}(N^b)$
with $b=1,2,3$, respectively.

\begin{figure*}[tb]
	\begin{centering}
		\includegraphics[width=0.7\textwidth]{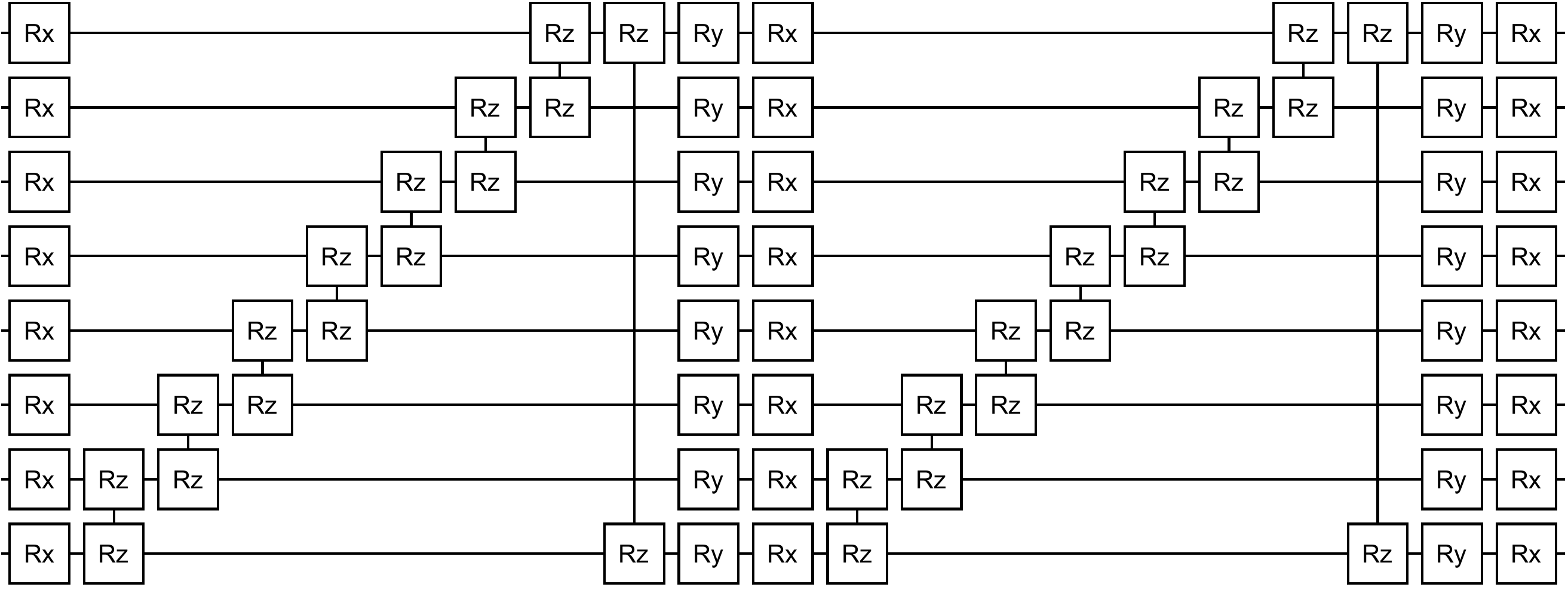}
		\caption{
			Example of an $8$-qubit ansatz structure used in our simulations.
			It consists of repeated blocks of single qubit $X$ and $Y$ rotations and two-qubit
			$ZZ$ evolution gates.
			All gates here have Pauli rank $r_g = 1$ as discussed in Sec.~\ref{paulidecomps}.
			\label{ansatzfig}
		}
	\end{centering}
\end{figure*}

\section{Propagating Variances} \label{sec: PV}

\begin{lemma} \label{propagationlemma}
	Let us define the regularised inverse $\finv := [\qf {+} \eta \mathrm{Id}]^{-1}$
	of the Fisher information matrix for some regularisation parameter $\eta \geq 0$
	and recall that we have defined the error measure $\epsilon^2 := \sum_{k=1}^\nu \var[v_k] $
	with $\underline{v} := \finv \underline{g}$ in the main text.
	If the elements of $\qf$ and $\underline{g}$ are measured independently and their errors are sufficiently small, then the error measure can be written in the form
	\begin{align}
		&\epsilon^{2} = \sum_{k, l = 1}^{\nu} a_{k l} \var \big \{[\qf]_{k l}\big\}  + \sum_{k = 1}^{\nu} b_{k} \var [g_l], \\
		& \textnormal{where} \ \ a_{k l} := \sum_{i,j = 1}^{\nu} [\finv]_{ik}^{2} [\finv]_{l j}^{2} g_{l}^{2}, \ \
		b_{k} := \sum_{l = 1}^{\nu} \big \{[\finv]_{k l} \big \}^{2}.
	\end{align}
\end{lemma}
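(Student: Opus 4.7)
The plan is to apply first-order propagation of uncertainty, justified by the small-error and independent-measurement hypotheses: linearise the map $(\qf, \underline{g}) \mapsto \underline{v}$ around the noise-free values and combine independent contributions in quadrature, following \cite{Ku1966,Lefebvre2000}. I would carry this out in two stages, first through the matrix inversion $\qf \mapsto \finv$, and then through the matrix-vector product $(\finv, \underline{g}) \mapsto \underline{v} = \finv \underline{g}$.

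The central computation is the sensitivity of the matrix inversion. Differentiating $\finv = [\qf + \eta\,\mathrm{Id}]^{-1}$ via the identity $\partial_A A^{-1} = -A^{-1}(\partial A)\,A^{-1}$ gives
\begin{equation*}
\frac{\partial [\finv]_{ij}}{\partial [\qf]_{kl}} = -[\finv]_{ik}\,[\finv]_{lj},
\end{equation*}
from which the standard entry-by-entry propagation formula yields
\begin{equation*}
\var\{[\finv]_{ij}\} = \sum_{k,l=1}^{\nu} [\finv]_{ik}^{2}\,[\finv]_{lj}^{2}\,\var\{[\qf]_{kl}\}.
\end{equation*}

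I would then write $v_i = \sum_{j=1}^{\nu} [\finv]_{ij}\,g_j$ and, treating each summand as a product of independent noisy factors, use the leading-order expansion of the variance of a product to write
\begin{equation*}
\var[v_i] = \sum_{j=1}^{\nu} \bigl(g_j^{2}\,\var\{[\finv]_{ij}\} + [\finv]_{ij}^{2}\,\var[g_j]\bigr).
\end{equation*}
Substituting the previous display, summing $\epsilon^{2} = \sum_{i}\var[v_i]$, and swapping summation orders identifies the coefficient of each $\var\{[\qf]_{kl}\}$ as $\sum_{i,j}[\finv]_{ik}^{2}[\finv]_{lj}^{2}g_j^{2}$, which is the $a_{kl}$ in the lemma, and the coefficient of each $\var[g_k]$ as $\sum_{i}[\finv]_{ik}^{2}$, which by symmetry of $\finv$ gives the stated $b_k$.

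The step I expect to require the most care is the independence invoked in the second stage: the entries of a single row of $\finv$ are not strictly independent, since they inherit correlated noise from the common underlying $\qf$. A one-shot Taylor expansion would instead produce the coefficient $b_k\,v_l^{2}$, which matches the lemma's $a_{kl}$ only after dropping off-diagonal cross terms in the expansion of $v_l^{2} = \bigl(\sum_j [\finv]_{lj}\,g_j\bigr)^2$; this is the standard Lefebvre/Ku approximation for entry-wise propagation through matrix inversion followed by a vector product, and I would flag it explicitly when writing out the full proof.
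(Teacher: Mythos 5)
Your proposal is correct and takes essentially the same route as the paper: a two-stage first-order propagation of independent entry-wise errors, using Ku's variance formula for the product $v_i=\sum_j[\finv]_{ij}g_j$ and the Lefebvre inverse-matrix variance formula $\var\{[\finv]_{ij}\}=\sum_{k,l}[\finv]_{ik}^2[\finv]_{lj}^2\var\{[\qf]_{kl}\}$ (which you derive from $\partial A^{-1}=-A^{-1}(\partial A)A^{-1}$ rather than citing), and the correlation caveat you flag is exactly the entry-wise-independence approximation the paper also makes tacitly. The one mismatch---your coefficient carries $g_j^2$ inside the double sum, $a_{kl}=\sum_{i,j}[\finv]_{ik}^2[\finv]_{lj}^2g_j^2$, whereas the lemma's display has $g_l^2$---is in your favour: the paper's own proof of Theorem~1 uses $a_{\alpha\beta}=\sum_{k,l}g_l^2[\finv]_{k\alpha}^2[\finv]_{l\beta}^2$ with $g$ carrying the summed index, so the lemma's stated $a_{kl}$ appears to contain an index typo that your derivation corrects.
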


\begin{proof}\label{proof:em}
	Under the assumption that the elements are measured independently and are sufficiently small, it is appropriate to use the variance formula \cite{Ku1966}, thus we can write error measure in terms of the variance of the elements in $\finv$ and $\underline{g}$ yielding
	\begin{align}
		\epsilon^{2} &= \sum_{k = 1}^{\nu} \var [v_{k}] \\
		&=  \sum_{k, l = 1}^{\nu} \var \big\{[\finv]_{kl} \big\} g_{l}^{2} + \big \{[\finv]_{k l} \big \}^{2} \var [g_l].
	\end{align}
	Now we use the result  derived in \cite{Lefebvre2000} to relate the variance of elements of $\finv$ to elements of $\qf$, namely
	\begin{equation}
		\var \big \{ [\finv]_{k l}\big \} = \sum_{i,j = 1}^{\nu} [\finv]_{ik}^{2} \var \big \{[\qf]_{k l}\big\} [\finv]_{l j}^{2}.
	\end{equation}
	Here we have used that $\var \big \{[\tilde{\mathbf{F}}_Q]_{k l}\big\} = \var \big \{[\qf]_{k l}\big\}$.
	Substituting this result into the error metric and trivially rearranging yields the required result
	
	\begin{align}
		\epsilon^{2} &= \sum_{k, l = 1}^{\nu} \bigg[\sum_{i,j = 1}^{\nu} [\finv]_{ik}^{2} \var \big \{[\qf]_{k l}\big\} [\finv]_{l j}^{2} \bigg] g_{l}^{2} + \big \{[\finv]_{k l} \big \}^{2} \var [g_l] \\
		&= \sum_{k, l = 1}^{\nu} \underbrace{\bigg[\sum_{i,j = 1}^{\nu} [\finv]_{ik}^{2} [\finv]_{l j}^{2} g_{l}^{2} \bigg]}_{a_{k l}} \var \big \{[\qf]_{k l}\big\}  + \sum_{k = 1}^{\nu} \underbrace{\bigg[ \sum_{l = 1}^{\nu} \big \{[\finv]_{k l} \big \}^{2} \bigg]}_{b_{k}} \var [g_k]. \\
	\end{align}
\end{proof}

\subsection{Proof of Theorem~1 \label{proof:theo1}}
\begin{proof}
	Recall that Lemma~\ref{propagationlemma} establishes the error propagation formula
	which we abbreviate as $\epsilon^2 = \epsilon_F^2 + \epsilon_g^2$ via
	\begin{equation}
		\epsilon_F^2 :=  \sum_{\alpha, \beta=1}^\nu a_{\alpha \beta} \var\{[\qf]_{\alpha \beta}\},
		\quad \quad
		\epsilon_g^2 := \sum_{l=1}^\nu b_l \var[g_l].
	\end{equation}
	The coefficients $a_{\alpha \beta}$ can be upper bounded as
	\begin{equation*}
		a_{\alpha \beta} = \sum_{k,l=1}^\nu g_l^2 \,  [\finv]_{k\alpha}^2  [\finv]_{l\beta}^2
		\leq \lVert g \rVert_\infty^2 \sum_{k=1}^\nu [\finv]_{k\alpha}^2  \sum_{l=1}^\nu     [\finv]_{l\beta}^2,
		\quad \quad
		\text{and} \quad \quad
		b_l   = \sum_{k=1}^\nu [\finv]_{k l}^2,
	\end{equation*}
	where $\lVert g \rVert_\infty$ is the absolute largest element in the gradient vector.
	We assume that every matrix and vector element is assigned measurements uniformly as $\samplf/\nu^2$ and
	$\samplg/\nu$ where $\samplf$ and $\samplg$ are the overall number of measurements required to estimate the
	matrix and vector objects such that the vector $\underline{v}$ is obtained to a precision $\epsilon$. 
	Using the upper bounds on the variances of individual gradient vector entries from Eq.~\eqref{gradVarUB}
	and individual matrix entries  from Eq.~\eqref{qfVarUB1} and Eq.~\eqref{qfVarUB2}
	we derive the explicit bound
	\begin{equation*}
		\var\{[\qf]_{\alpha \beta}\} \leq V_F := \nu^2 \samplf^{-1} \, \fF
		\quad \quad \var[g_l] \leq V_G := \nu \samplg^{-1}   \, \hamilnorm \fg,
	\end{equation*}
	where 	$\lVert \mathcal{H} \rVert$ is the Hilbert-Schmidt or Frobenius norm of the Hamiltonian
	and $\fF$, $fg$ are constant factors that depend on the ansatz structure and the
	and the approach used to estimate the gradient/Fisher matrix, refer to Sec.~\ref{paulidecomps}.
	For example for the simplified ansatz ($r_g=1$) in Sec.~\ref{paulidecomps} we obtain $\fF \leq 2$ and 
	$1 \leq \fg \leq r_h$. Here the lower bound $\fg = 1$ is saturated when when all terms in the
	Hamiltonian from Eq.~\eqref{hamildec} commute and can be measured simultaneously while the upper bound $\fg = r_h$ is saturated
	when all $r_h$ terms in the Hamiltonian need to be estimated independently (because they do
	not commute) and their strengths are comparable, refer to Sec.~\ref{gradsec}.

	We use the above derived upper bounds and obtain
	\begin{equation}
		\epsilon_F^2 
		\leq 
		V_F \, \lVert g \rVert_\infty^2 \sum_{\alpha, k =1}^\nu  [\finv]_{k\alpha}^2 \sum_{\beta,l=1}^\nu     [\finv]_{l\beta}^2
		= V_F \lVert g \rVert_\infty^2 \, \lVert \finv \rVert^4,
		\quad \quad \quad
		\epsilon_g^2 \leq
		V_G  \sum_{k,l=1}^\nu [\finv]_{k l}^2 = V_G \, \lVert \finv \rVert^2,
	\end{equation}
	where $\lVert \finv \rVert$ is the Hilbert-Schmidt or Frobenius norm of the inverse matrix $\finv$.

	We now require that $\epsilon^2/2 =: \epsilon_F^2 $ and $ \epsilon^2/2=: \epsilon_g^2$
	as a possible choice to satisfy $\epsilon^2 = \epsilon_F^2 + \epsilon_g^2$.
	This results in the explicit bound on the number of measurements
	after substituting  $V_F$ and $V_G$ as
	\begin{equation*}
		\samplf  \leq  2 \,  \nu^2 \,  \lVert g \rVert_\infty^2 \, \lVert \finv \rVert^4  \epsilon^{-2} \fF \quad \quad 
		\samplg \leq  2 \, \nu\, \lVert \finv \rVert^2  \epsilon^{-2}  \,  \hamilnorm \fg.
	\end{equation*}
	We introduce the notation $\spec[\finv] := \lVert \finv \rVert^2 /\nu = \tfrac{1}{\nu}\sum_{k=1}^\nu \sigma_k^{2}(\finv) $
	to denote the average of the squared singular values of $\finv$. Note that, for example,
	the identity operator yields $\spec[\mathrm{Id}] = 1$ and we derive upper and lower bounds on
	in general in Lemma~\ref{lemmaspec}.
	
	We finally establish the upper bounds
	\begin{equation*}
		\samplf  \leq   2 \,  \nu^4 \,  \lVert g \rVert_\infty^2 \, \spec[\finv]^2  \epsilon^{-2} 	 \, \fF, \quad \quad 
		\samplg \leq 2 \, \nu^2 \, \spec[\finv]  \epsilon^{-2}   \, \hamilnorm \fg.
	\end{equation*}

\end{proof}

\subsection{Proof of Theorem~2 \label{proof:theo2}}
\begin{proof}
	
	Recall that in Sec.~\ref{proof:theo1} we have defined the precision associated with the gradient vector in the natural gradinet
	approach as $\epsilon_g^2 := \sum_{l=1}^\nu b_l \var[g_l]$. We have also defined the total number of measurements $\samplg$ that
	needs 		to be assigned to determining the gradient vector $\underline{g}$ to a precision $\epsilon_g^2$ as
	\begin{equation*}
		\samplg := \frac{\nu}{\epsilon_g^2} \sum_{l=1}^\nu b_l \var[g_l],
	\end{equation*}
	since each gradient entry $g_l$ receives $\samplg/\nu$ samples.

	In the limiting case $\finv \rightarrow \mathrm{Id}$ and $\var\{[\qf]_{k l}\}  \rightarrow 0$ the
	natural gradient approach reduces to the simple gradient descent approach with $b_l=1$.
	We can therefore define the total number of measurements $\samplgd$ required to reconstruct the gradient vector in the simple gradient
	descent approach via $b_l=1$ as 
	\begin{equation*}
		\samplgd := \frac{\nu}{\epsilon_g^2} \sum_{l=1}^\nu \var[g_l].
	\end{equation*}

	Let us start by explicitly writing the ratio of measurements as
	\begin{equation*}
		\frac{\samplg}{\samplgd} = \frac{\sum_{l=1}^\nu	b_l \var[g_l]}{\sum_{l=1}^\nu  \var[g_l]}
	\end{equation*}
	and let us consider the term
	\begin{equation*}
		b_l   = \sum_{k=1}^\nu [\finv]_{k l}^2 = \lVert \mathrm{Col}_l[\finv] \rVert^2
		=  \lVert \finv B_l \rVert^2 \leq  \lVert \finv \rVert_\infty^2  = \sigma_{\mathrm{max}} (\finv)^2 \leq \eta^{-2}
	\end{equation*}
	where $\eta$ is either a regularisation parameter or the smallest singular value of $\qf$,
	$\mathrm{Col}_l[\finv] $ denotes the $l$-th column vector of the matrix $\finv$
	and $B_l$ is the $l$-th standard basis vector with $\lVert B_l \rVert = 1$.
	Our general upper bound follows as
	\begin{equation*}
		\frac{\samplg}{\samplgd} = \frac{\sum_{l=1}^\nu	b_l \var[g_l]}{\sum_{l=1}^\nu  \var[g_l]} \leq 
		\eta^{-2} \frac{\sum_{l=1}^\nu	 \var[g_l]}{\sum_{l=1}^\nu  \var[g_l]} =\eta^{-2}
	\end{equation*}
	We now establish an approximation under the assumption that 
	$\var[g_l]$ does not significantly depend on the index $l$, e.g., when the gradient is vanishing close to an optimal
	point 	 via $M_{kl} \rightarrow 0$ in  Eq.~\eqref{gradientVar} as, e.g, 
	\begin{equation}
		\var[g_k]  = \sum_{l=1}^{r_h} h_l^2 \,   (1{-}[M_{kl}]^2) \rightarrow  \sum_{l=1}^{r_h} h_l^2 = \hamilnorm.
	\end{equation}
	This results in
	\begin{equation*}
		\frac{\samplg}{\samplgd} = \frac{\sum_{l=1}^\nu	b_l \var[g_l]}{\sum_{l=1}^\nu  \var[g_l]}
		\approx \frac{\sum_{l=1}^\nu	b_l }{\nu} = \frac{\sum_{k,l=1}^\nu [\finv]_{k l}^2  }{\nu}
		= \lVert \finv \rVert /\nu = \spec[\finv] .
	\end{equation*}
\end{proof}	
\subsection{Remarks on Theorem~2}
We  establish bounds in case of the relative-precision scheme, i.e., when $\epsilon \propto \lVert g(t) \rVert$
and $\epsilon \propto \lVert v(t) \rVert$ in case of the gradient and natural gradient vectors,
respectively. The upper bound follows via	
\begin{equation}
	\frac{\lVert g \rVert^2}{\lVert v \rVert^2} = 	 	\frac{\lVert g \rVert^2}{\lVert  \finv g \rVert^2} \leq
	\sigma_{\mathrm{min}} (\finv)^{-2}, 	
\end{equation}
and a lower bound can be specified as
\begin{equation}
	\frac{\lVert g \rVert^2}{\lVert v \rVert^2} = 	 	\frac{\lVert g \rVert^2}{\lVert  \qf^{-1} g \rVert^2} \geq
	\sigma_{\mathrm{max}} (\finv)^{-2} 
\end{equation}
and in complete generality
\begin{equation}
	\frac{\samplg}{\samplgd} \frac{\lVert g \rVert^2}{\lVert v \rVert^2}
	\leq  [\sigma_{\mathrm{max}} (\finv)	/\sigma_{\mathrm{min}} (\finv)]^{2} =: \cond[\qf^{-1}]^{2},
\end{equation}
and Lemma~\ref{lemmaspec} establishes that $\cond[\qf^{-1}] \leq \eta^{-1} (\nu r_g + \eta)$.

\begin{lemma} \label{lemmaspec}
	Assuming the simple regularisation $\finv := (\qf + \eta \mathrm{Id}_\nu)^{-1}$,
	the largest singular value of the inverse is upper bounded as $\sigma_\mathrm{max}(\finv) \leq \eta^{-1}$
	and the smallest singular value is lower bounded via $\sigma_\mathrm{min}(\finv) \geq  (\nu r_g + \eta)^{-1}$.
	Moreover, the bounds $ (\nu r_g + \eta)^{-2} \leq \spec[\finv] \leq \eta^{-2}$ and $\cond[\finv] \leq \eta^{-1} (\nu r_g + \eta)$
	hold in general. Here $r_g$ is the largest Pauli rank of the ansatz gates from Sec.~\ref{paulidecomps}.
\end{lemma}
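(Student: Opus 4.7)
The plan is to diagonalize $\qf$ and reduce each of the four claims to a scalar inequality on its eigenvalues. Because $\qf$ is real, symmetric, and positive semidefinite, there exists an orthogonal $U$ with $\qf = U\,\mathrm{diag}(\lambda_1,\dots,\lambda_\nu)\,U^T$ and $\lambda_k \geq 0$. Consequently
\begin{equation*}
	\finv = U\,\mathrm{diag}((\lambda_1+\eta)^{-1},\dots,(\lambda_\nu+\eta)^{-1})\,U^T,
\end{equation*}
so the singular values of $\finv$ are precisely $(\lambda_k+\eta)^{-1}$ (they coincide with the eigenvalues since $\finv$ is itself symmetric PSD). Every assertion in the lemma is now a statement about these scalars.

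The upper bound $\sigma_\mathrm{max}(\finv) \leq \eta^{-1}$ is immediate from $\lambda_k \geq 0$; squaring and averaging over $k$ yields $\spec[\finv] = \nu^{-1}\sum_k (\lambda_k+\eta)^{-2} \leq \eta^{-2}$. For the lower bound on $\sigma_\mathrm{min}(\finv) = (\lambda_\mathrm{max}(\qf)+\eta)^{-1}$ I need an upper bound on $\lambda_\mathrm{max}(\qf)$, which is the only nontrivial input. Here I invoke Lemma~1, which gives $[\qf]_{kl} \leq r_g^2$ for every entry. Since $\qf$ is PSD one has $\lambda_\mathrm{max}(\qf) \leq \tr(\qf) = \sum_k [\qf]_{kk} \leq \nu r_g^2$, hence $\sigma_\mathrm{min}(\finv) \geq (\nu r_g^2 + \eta)^{-1}$. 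The lower bound on $\spec[\finv]$ follows by applying the same uniform inequality term-by-term to the sum defining it, and the condition-number bound is just the quotient of the two extreme singular values.

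The main obstacle is matching the sharper constant $\nu r_g$ (linear in the Pauli rank) stated in the lemma, while the route through per-entry bounds of Lemma~1 naturally delivers $\nu r_g^2$. Closing this gap would require bypassing the diagonal/trace estimate and instead exploiting the factorization $\qf \propto M^\dagger M$, with the columns of $M$ being the vectorized derivatives $\partial_k \rho$, and bounding $\lVert M \rVert_\mathrm{op}$ directly from the Pauli structure of the generators rather than from $\sum_k \lVert \partial_k \rho \rVert_{HS}^2$. Everything else in the proof is a clean scalar manipulation once the diagonalization has been performed.
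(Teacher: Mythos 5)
Your proof is correct and takes essentially the same route as the paper's: both arguments reduce every claim to bounding $\sigma_\mathrm{max}(\qf)$ via the entrywise bound $[\qf]_{kl}\leq r_g^2$ of Lemma~1, and then pass to the shifted inverse. The only cosmetic difference is the intermediate step: you use $\lambda_\mathrm{max}(\qf)\leq\tr(\qf)\leq\nu r_g^2$ (valid since $\qf$ is positive semidefinite), whereas the paper uses the max-norm/row-sum estimate $\sigma_\mathrm{max}(\qf)\leq\nu\,\max_{k,l}|[\qf]_{kl}|\leq\nu r_g^2$; both deliver the same constant. Regarding the ``obstacle'' you flag at the end: there is no gap on your side. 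The paper's own proof likewise only establishes $\sigma_\mathrm{min}(\finv)\geq(\nu r_g^2+\eta)^{-1}$ and $(\nu r_g^2+\eta)^{-2}\leq\spec[\finv]\leq\eta^{-2}$, so the factor $\nu r_g$ (rather than $\nu r_g^2$) appearing in the lemma statement and in the $\cond[\finv]$ bound is inconsistent with the proof given in the paper and is best read as a typo; for the simplified ansatz with $r_g=1$ used throughout the simulations the two expressions coincide, which is presumably why it went unnoticed. Your suggested sharper route through an operator-norm bound on the factorised form of $\qf$ is not pursued in the paper either, and is not needed to reproduce what the paper actually proves.
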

\begin{proof}
	It immediately follows that 
	\begin{equation*}
		\sigma_\mathrm{max}([\qf + \eta \mathrm{Id}_\nu]^{-1}) = [\sigma_\mathrm{min}(\qf + \eta \mathrm{Id}_\nu)]^{-1} \leq \eta^{-1}
	\end{equation*}
	via  $\sigma_\mathrm{min}(\qf + \eta \mathrm{Id}_\nu) \geq \eta$.
	Now we use the boundedness of the matrix elements as $|[\qf ^{-1}]_{kl}| \leq r_g^2$ from Lemma~\ref{qfilemma}
	which establishes the matrix norm $\lVert \qf \rVert_{\mathrm{max}} := \max_{k,l}|[\qf ]_{kl}| \leq r_g^2$.
	This bounds the largest singular value of $\qf$ as
	\begin{equation*}
		r_g^2 \geq	\lVert \qf \rVert_{\mathrm{max}} \geq \lVert \qf \rVert_{\infty} /\nu := \sigma_\mathrm{max}(\qf)/\nu.
	\end{equation*}
	The smallest singular value of the inverse
	is therefore bounded as
	\begin{equation*}
		\sigma_\mathrm{min}([\qf + \eta \mathrm{Id}_\nu]^{-1})  = [\sigma_\mathrm{max}(\qf + \eta \mathrm{Id}_\nu)]^{-1} \geq (\nu r_g^2 + \eta)^{-1}.
	\end{equation*}
	We can now establish the bound
	\begin{equation}
		(\nu r_g^2 + \eta)^{-2} \leq \sigma_\mathrm{min}^2( \finv ) \leq \spec[\finv] \leq  \sigma_\mathrm{max}^2( \finv ) \leq \eta^{-2}
	\end{equation}
	And we can therefore bound the growth rate of the quantity $\spec[\finv]$
	as $\spec[\finv] = \mathcal{O}(\nu^s)$ with $-2 \leq s \leq 0$.
	
\end{proof}

\section{Optimal Measurements}

\begin{lemma}\label{lemma: optimal}
	Measurements are distributed optimally when the number of samples for determining individual elements of the matrix and gradient
	are given by
	\begin{align}
		[\samplfmat]_{kl} &=   \epsilon^{-2} \, \Sigma \sqrt{a_{kl} \var \big \{[\qf]_{kl}\big\}}, \\
		[\samplgvec]_{k}  &=  \epsilon^{-2} \, \Sigma \sqrt{b_{k} \var [g_{k}]},
	\end{align}
	respectively. Here $\var[\cdot]$ is the variance of a single measurement of the corresponding element and we
	explicitly define $\Sigma$ via the coefficients $a_{kl}$ and $b_k$ from Appendix~\ref{sec: PV} as
	\begin{equation}
		\Sigma := \sum_{k,l=1}^{\nu} \sqrt{a_{kl} \var \big \{[\qf]_{kl}\big\}} + \sum_{k=1}^{\nu} \sqrt{b_k \var[g_k]}.
	\end{equation}
\end{lemma}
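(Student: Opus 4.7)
The plan is to cast the statement as a standard constrained optimisation problem. First I would make explicit that if $[\samplfmat]_{kl}$ measurements are assigned to the matrix entry $[\qf]_{kl}$ (and similarly $[\samplgvec]_k$ to $g_k$), then by the usual i.i.d.\ scaling of the sample mean the empirical variance of the estimator is $\var\{[\qf]_{kl}\}/[\samplfmat]_{kl}$ and $\var[g_k]/[\samplgvec]_k$, where $\var[\cdot]$ refers to the single-shot variance. Substituting these into the error-propagation formula from Lemma~\ref{propagationlemma} turns the precision requirement into the constraint
\begin{equation*}
\epsilon^2 \;=\; \sum_{k,l=1}^\nu \frac{a_{kl}\,\var\{[\qf]_{kl}\}}{[\samplfmat]_{kl}} \;+\; \sum_{k=1}^\nu \frac{b_k\,\var[g_k]}{[\samplgvec]_k},
\end{equation*}
while the objective to be minimised is the total sample budget $\nopt = \sum_{k,l}[\samplfmat]_{kl} + \sum_k [\samplgvec]_k$ over strictly positive sample counts.

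Next I would solve this with a single Lagrange multiplier $\lambda$. Differentiating the Lagrangian with respect to each $[\samplfmat]_{kl}$ and $[\samplgvec]_k$ and setting the derivatives to zero yields
\begin{equation*}
[\samplfmat]_{kl} \;=\; \sqrt{\lambda\, a_{kl}\,\var\{[\qf]_{kl}\}}, \qquad [\samplgvec]_k \;=\; \sqrt{\lambda\, b_k\,\var[g_k]}.
\end{equation*}
Substituting these back into the precision constraint collapses each summand to its square root, so that the constraint reduces to $\epsilon^2 = \Sigma/\sqrt{\lambda}$, which fixes $\sqrt{\lambda} = \Sigma/\epsilon^2$. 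Re-inserting this value into the expressions above gives the stated formulas for $[\samplfmat]_{kl}$ and $[\samplgvec]_k$, and evaluating the objective at the optimum yields the bonus identity $\nopt = \Sigma^2/\epsilon^2$ quoted in the main text.

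Finally I would verify that the critical point is indeed the global minimum. Each constraint summand is strictly convex in its own positive sample-count variable (a positive constant over that variable), so the feasible region above the constraint surface is convex, and together with the linear objective this ensures that the Lagrange solution is the unique minimiser. The main obstacle I anticipate is not analytic but interpretational: strictly speaking the $[\samplfmat]_{kl}$ and $[\samplgvec]_k$ must be non-negative integers, and one has to argue that rounding the continuous optimum introduces only a negligible overhead relative to $\nopt$. A secondary subtlety is handling entries for which $\var\{[\qf]_{kl}\}$ or $a_{kl}$ vanishes, which must be excluded from the optimisation (no measurements needed) to avoid degeneracy; this is the place where the symmetry of $\qf$ mentioned in the statement is most naturally incorporated, by restricting the sum to $k\le l$ and absorbing the factor of two into a modified $a_{kl}$.
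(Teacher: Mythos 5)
Your proposal is correct and follows essentially the same route as the paper: both treat $\var[\cdot]$ as the single-shot variance, scale it by the per-element sample counts in the error-propagation formula of Lemma~\ref{propagationlemma}, and solve the resulting constrained optimisation, with your Lagrange-multiplier computation simply making explicit the step the paper states more tersely. The only (immaterial) difference is that you minimise the total budget at fixed $\epsilon$ while the paper minimises $\epsilon^2$ at fixed $\nopt$; these dual formulations share the same stationarity conditions and yield identical allocations and the identity $\nopt=\Sigma^2/\epsilon^2$.
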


\begin{proof}
	From Lemma \ref{propagationlemma} we write the error measure as
	
	\begin{equation}
		\epsilon^{2} = \sum_{k, l = 1}^{\nu} a_{k l} \var \big \{[\qf]_{k l}\big\}  + \sum_{k = 1}^{\nu} b_{k} \var [g_k],
	\end{equation}
	where $\var [\cdot]$ denotes the variance in the statistical average over many measurements. Now we allow $\var[\cdot]$ to denote variance in a single measurement while $[\samplfmat]_{kl}$ and $[\samplgvec]_{k}$ are the number of measurement assigned each element $[\qf]_{kl}$ and $g_k$ respectively, so the error measure becomes
	
	\begin{equation}
		\epsilon^{2} = \sum_{k, l = 1}^{\nu} \frac{a_{k l} \var \big \{[\qf]_{k l}\big\}}{[\samplfmat]_{kl}}  + \sum_{k = 1}^{\nu} \frac{b_{k} \var [g_k]}{[\samplgvec]_{k}}.
	\end{equation}
	By minimising error measure, in this form, subject to the constraint of a fixed total number of measurements, so that
	
	\begin{equation}
		N_{opt} = \sum_{k, l = 1}^{\nu} [\samplfmat]_{kl} + \sum_{k = 1}^{\nu} [\samplgvec]_{k},
	\end{equation}
	we find that the optimal fraction of measurement to be assigned to each element is 
	\begin{align}
		\frac{[\samplfmat]_{kl}}{\nopt} =  \frac{\sqrt{a_{kl} \var \big \{[\qf]_{kl}\big\}}}{\Sigma}, \quad
		\frac{[\samplgvec]_{k}}{\nopt} = \frac{\sqrt{b_{k} \var [g_{k}]}}{\Sigma} \\
		\text{where} \quad \Sigma := \sum_{k,l=1}^{\nu} \sqrt{a_{kl} \var \big \{[\qf]_{kl}\big\}} + \sum_{k=1}^{\nu} \sqrt{b_k \var[g_k]}.
	\end{align}
	By substituting this results in the error measure we can remove the dependence on the total number of measurements $N_{opt}$, to yield the required result.
	
\end{proof}

\subsection{Fisher Matrix Symmetry}\label{appendix: symmetry}

\begin{lemma}
	The symmetry of Fisher Matrix can be accounted for by replacing the elements $a_{kl}$ with $a'_{kl}$, where 
	
	\begin{equation}
		a'_{k l} := \begin{cases}
			0 & k < l \\ 
			a_{k k} &  k = l \\
			a_{k l} + a_{l k} & k > l
		\end{cases}
	\end{equation}
\end{lemma}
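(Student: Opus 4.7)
The plan is to exploit the symmetry $[\qf]_{kl} = [\qf]_{lk}$ of the quantum Fisher information matrix, which implies that a single measurement outcome used to estimate $[\qf]_{kl}$ equally serves as an estimate for $[\qf]_{lk}$, so $\var\{[\qf]_{kl}\} = \var\{[\qf]_{lk}\}$ and moreover $[\samplfmat]_{kl}$ and $[\samplfmat]_{lk}$ refer to the \emph{same} set of circuit repetitions. Thus in the error-propagation formula of Lemma~\ref{propagationlemma},
\begin{equation*}
\epsilon^{2} = \sum_{k, l = 1}^{\nu} \frac{a_{k l} \var\{[\qf]_{k l}\}}{[\samplfmat]_{kl}} + \sum_{k = 1}^{\nu} \frac{b_{k} \var[g_k]}{[\samplgvec]_{k}},
\end{equation*}
the off-diagonal pair of terms indexed by $(k,l)$ and $(l,k)$ with $k\neq l$ share a common denominator and should be combined.

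First I would rewrite the double sum by splitting it into diagonal $(k=l)$, upper $(k<l)$, and lower $(k>l)$ pieces, and then use the symmetry $\var\{[\qf]_{kl}\} = \var\{[\qf]_{lk}\}$ together with $[\samplfmat]_{kl} = [\samplfmat]_{lk}$ to merge the $k<l$ and $k>l$ contributions into a single sum over $k>l$ with combined coefficient $a_{kl}+a_{lk}$. This gives
\begin{equation*}
\sum_{k,l=1}^{\nu} \frac{a_{kl} \var\{[\qf]_{kl}\}}{[\samplfmat]_{kl}} = \sum_{k\geq l} \frac{a'_{kl} \var\{[\qf]_{kl}\}}{[\samplfmat]_{kl}},
\end{equation*}
where $a'_{kl}$ is exactly the definition in the statement (zero for $k<l$ so those redundant entries drop out, $a_{kk}$ on the diagonal, and $a_{kl}+a_{lk}$ strictly below it).

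Next I would re-apply the Lagrange-multiplier optimisation of Lemma~\ref{lemma: optimal} to the restricted cost function with the replacement $a_{kl}\mapsto a'_{kl}$ under the constraint that the total budget only counts each independent matrix entry once, $\nopt = \sum_{k\geq l}[\samplfmat]_{kl} + \sum_k [\samplgvec]_k$. Since the derivation there is structurally identical, the optimum has the same form but with $a'_{kl}$ in place of $a_{kl}$, establishing the claim. The only mild subtlety is bookkeeping to confirm that the substitution $a_{kl}\mapsto a'_{kl}$ is equivalent to summing over only the independent lower-triangular indices, but the piecewise definition is engineered precisely so that the two formulations coincide term-by-term.
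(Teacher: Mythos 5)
Your proposal is correct and follows essentially the same route as the paper: exploit $\var\{[\qf]_{kl}\}=\var\{[\qf]_{lk}\}$ (shared measurements), fold the $k<l$ terms into the $k>l$ terms to get the combined coefficients $a_{kl}+a_{lk}$, and then rerun the optimisation of Lemma~\ref{lemma: optimal} verbatim with $a_{kl}$ replaced by $a'_{kl}$. Your explicit remark that the budget constraint should count each independent entry once is a slightly more careful piece of bookkeeping than the paper spells out, but it is equivalent since $a'_{kl}=0$ for $k<l$ forces those allocations to vanish in the optimum.
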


\begin{proof}
	
	As the Fisher Matrix is symmetric, measurements of $[\qf]_{kl}$ element also constitute measurements of the $[\qf]_{lk}$, so $\var \{[\qf]_{kl}]\} = \var \{[\qf]_{lk} \}$. Thus, the error measure can be written as 
	
	\begin{align}
		\epsilon^{2} = \sum_{k = 1}^{\nu} a_{k k} \var \big \{[\qf]_{k k}\big\} +  \sum_{k > l}^{\nu} 2 a_{kl} \var \big \{[\qf]_{kl}\big\} + \sum_{k = 1}^{\nu} b_{k} \var [g_k].
	\end{align}
	It is possible to force this back into the original form of of the error measure if we define
	\begin{equation}
		a'_{k l} := \begin{cases}
			0 & k < l \\ 
			a_{k k} &  k = l \\
			a_{k l} + a_{l k} & k > l
		\end{cases},
	\end{equation}
	so that the error measure error measure can be written as
	\begin{equation}
		\epsilon^{2} = \sum_{k, l = 1}^{\nu} a'_{k l} \var \big \{[\qf]_{k l}\big\} + \sum_{k = 1}^{\nu} b_{k} \var [g_k].
	\end{equation}
	Using the error measure written in this form as a starting point for the derivation in the proof of Lemma \ref{lemma: optimal} we trivially obtain the same results with the elements $a_{kl}$ replaced with $a'_{kl}$.
	
\end{proof}

\section{Applications beyond natural gradient}
Let us now comment on how main results of this work can be applied to other quantum algorithms beyond natural gradient optimisation.
For this reason, we now consider 3 categories of algorithms and review how the results in the main text can be tailored
to these algorithms.
We note that all algorithms considered in the following use a parameter update rule whereby an inverse
matrix $A^{-1}$ is applied to a vector $v$.

\textbf{Metric-aware optimisation algorithms:}
We have covered metric-aware optimisation algorithms in the main text which include quantum natural gradient descent
and imaginary time evolution \cite{li2017efficient,xiaotheory,samimagtime,koczor2019quantum,quantumnatgrad}.
In this case the matrix object $A$ is the quantum Fisher information, which only depends on the ansatz circuit, 
while the vector object is the gradient vector that depends on both the ansatz circuit and on the Hamiltonian.

\textbf{Variational quantum simulation:}
The time evolution of a quantum system under a Hamiltonian can be simulated using techniques described in
\cite{li2017efficient,xiaotheory}. In such a scenario the matrix object (imaginary part of the quantum geometric tensor)
still only depends on the ansatz circuit and the vector object is related to the gradient vector and therefore all our results apply,
except for Result~\ref{result1}. The reason is the following. In Result~\ref{result1} we 
assumed that when increasing the number of iterations the norm of the gradient vector $v$ vanishes
due to convergence. However, in case of variational simulation, the norm of the vector object does not necessarily decrease.

\textbf{Hessian optimisation:}
Analogously to metric aware optimisations here the inverse of the Hessian matrix is applied to the gradient vector.
Thus our error propagation formula in Eq.~\eqref{errorpropagation} and our optimal measurement distribution scheme in
Result~\ref{result3} immediately apply to this scenario too.
The main difference to the previously discussed scenarios is that the Hessian matrix now depends
on both the ansatz circuit and on the Hamiltonian. For this reason, Theorems~1-2 need to
be modified such that the dependence on the Hamiltonian is taken into account.
As such, the main conclusion of Result~\ref{result1} will still hold: as the optimisation converges
the vanishingly small gradient becomes increasingly more expensive to determine to a sufficient precision.
However, via Result~\ref{result2} it is expected that when we increase the number of qubits the Hessian matrix becomes
increasingly more expensive to estimate due to its dependence on the Hamiltonian.


\begin{thebibliography}{60}%
	\makeatletter
	\providecommand \@ifxundefined [1]{%
		\@ifx{#1\undefined}
	}%
	\providecommand \@ifnum [1]{%
		\ifnum #1\expandafter \@firstoftwo
		\else \expandafter \@secondoftwo
		\fi
	}%
	\providecommand \@ifx [1]{%
		\ifx #1\expandafter \@firstoftwo
		\else \expandafter \@secondoftwo
		\fi
	}%
	\providecommand \natexlab [1]{#1}%
	\providecommand \enquote  [1]{``#1''}%
	\providecommand \bibnamefont  [1]{#1}%
	\providecommand \bibfnamefont [1]{#1}%
	\providecommand \citenamefont [1]{#1}%
	\providecommand \href@noop [0]{\@secondoftwo}%
	\providecommand \href [0]{\begingroup \@sanitize@url \@href}%
	\providecommand \@href[1]{\@@startlink{#1}\@@href}%
	\providecommand \@@href[1]{\endgroup#1\@@endlink}%
	\providecommand \@sanitize@url [0]{\catcode `\\12\catcode `\$12\catcode
		`\&12\catcode `\#12\catcode `\^12\catcode `\_12\catcode `\%12\relax}%
	\providecommand \@@startlink[1]{}%
	\providecommand \@@endlink[0]{}%
	\providecommand \url  [0]{\begingroup\@sanitize@url \@url }%
	\providecommand \@url [1]{\endgroup\@href {#1}{\urlprefix }}%
	\providecommand \urlprefix  [0]{URL }%
	\providecommand \Eprint [0]{\href }%
	\providecommand \doibase [0]{https://doi.org/}%
	\providecommand \selectlanguage [0]{\@gobble}%
	\providecommand \bibinfo  [0]{\@secondoftwo}%
	\providecommand \bibfield  [0]{\@secondoftwo}%
	\providecommand \translation [1]{[#1]}%
	\providecommand \BibitemOpen [0]{}%
	\providecommand \bibitemStop [0]{}%
	\providecommand \bibitemNoStop [0]{.\EOS\space}%
	\providecommand \EOS [0]{\spacefactor3000\relax}%
	\providecommand \BibitemShut  [1]{\csname bibitem#1\endcsname}%
	\let\auto@bib@innerbib\@empty
	\bibitem [{\citenamefont {Farhi}\ \emph {et~al.}(2014)\citenamefont {Farhi},
		\citenamefont {Goldstone},\ and\ \citenamefont {Gutmann}}]{farhi2014quantum}%
	\BibitemOpen
	\bibfield  {author} {\bibinfo {author} {\bibfnamefont {E.}~\bibnamefont
			{Farhi}}, \bibinfo {author} {\bibfnamefont {J.}~\bibnamefont {Goldstone}},\
		and\ \bibinfo {author} {\bibfnamefont {S.}~\bibnamefont {Gutmann}},\
	}\bibfield  {title} {\bibinfo {title} {A quantum approximate optimization
			algorithm},\ }\href@noop {} {\bibfield  {journal} {\bibinfo  {journal} {arXiv
				preprint arXiv:1411.4028}\ } (\bibinfo {year} {2014})}\BibitemShut {NoStop}%
	\bibitem [{\citenamefont {Peruzzo}\ \emph {et~al.}(2014)\citenamefont
		{Peruzzo}, \citenamefont {McClean}, \citenamefont {Shadbolt}, \citenamefont
		{Yung}, \citenamefont {Zhou}, \citenamefont {Love}, \citenamefont
		{Aspuru-Guzik},\ and\ \citenamefont {O'Brien}}]{peruzzo2014variational}%
	\BibitemOpen
	\bibfield  {author} {\bibinfo {author} {\bibfnamefont {A.}~\bibnamefont
			{Peruzzo}}, \bibinfo {author} {\bibfnamefont {J.}~\bibnamefont {McClean}},
		\bibinfo {author} {\bibfnamefont {P.}~\bibnamefont {Shadbolt}}, \bibinfo
		{author} {\bibfnamefont {M.-H.}\ \bibnamefont {Yung}}, \bibinfo {author}
		{\bibfnamefont {Q.}~\bibnamefont {Zhou}}, \bibinfo {author} {\bibfnamefont
			{P.~J.}\ \bibnamefont {Love}}, \bibinfo {author} {\bibfnamefont
			{A.}~\bibnamefont {Aspuru-Guzik}},\ and\ \bibinfo {author} {\bibfnamefont
			{J.~L.}\ \bibnamefont {O'Brien}},\ }\bibfield  {title} {\bibinfo {title} {A
			variational eigenvalue solver on a photonic quantum processor},\ }\href@noop
	{} {\bibfield  {journal} {\bibinfo  {journal} {Nature communications}\
		}\textbf {\bibinfo {volume} {5}} (\bibinfo {year} {2014})}\BibitemShut
	{NoStop}%
	\bibitem [{\citenamefont {Wang}\ \emph {et~al.}(2015)\citenamefont {Wang},
		\citenamefont {Dolde}, \citenamefont {Biamonte}, \citenamefont {Babbush},
		\citenamefont {Bergholm}, \citenamefont {Yang}, \citenamefont {Jakobi},
		\citenamefont {Neumann}, \citenamefont {Aspuru-Guzik}, \citenamefont
		{Whitfield} \emph {et~al.}}]{wang2015quantum}%
	\BibitemOpen
	\bibfield  {author} {\bibinfo {author} {\bibfnamefont {Y.}~\bibnamefont
			{Wang}}, \bibinfo {author} {\bibfnamefont {F.}~\bibnamefont {Dolde}},
		\bibinfo {author} {\bibfnamefont {J.}~\bibnamefont {Biamonte}}, \bibinfo
		{author} {\bibfnamefont {R.}~\bibnamefont {Babbush}}, \bibinfo {author}
		{\bibfnamefont {V.}~\bibnamefont {Bergholm}}, \bibinfo {author}
		{\bibfnamefont {S.}~\bibnamefont {Yang}}, \bibinfo {author} {\bibfnamefont
			{I.}~\bibnamefont {Jakobi}}, \bibinfo {author} {\bibfnamefont
			{P.}~\bibnamefont {Neumann}}, \bibinfo {author} {\bibfnamefont
			{A.}~\bibnamefont {Aspuru-Guzik}}, \bibinfo {author} {\bibfnamefont {J.~D.}\
			\bibnamefont {Whitfield}}, \emph {et~al.},\ }\bibfield  {title} {\bibinfo
		{title} {Quantum simulation of helium hydride cation in a solid-state spin
			register},\ }\href@noop {} {\bibfield  {journal} {\bibinfo  {journal} {ACS
				nano}\ }\textbf {\bibinfo {volume} {9}},\ \bibinfo {pages} {7769} (\bibinfo
		{year} {2015})}\BibitemShut {NoStop}%
	\bibitem [{\citenamefont {O'Malley}\ \emph {et~al.}(2016)\citenamefont
		{O'Malley}, \citenamefont {Babbush}, \citenamefont {Kivlichan}, \citenamefont
		{Romero}, \citenamefont {McClean}, \citenamefont {Barends}, \citenamefont
		{Kelly}, \citenamefont {Roushan}, \citenamefont {Tranter}, \citenamefont
		{Ding}, \citenamefont {Campbell}, \citenamefont {Chen}, \citenamefont {Chen},
		\citenamefont {Chiaro}, \citenamefont {Dunsworth}, \citenamefont {Fowler},
		\citenamefont {Jeffrey}, \citenamefont {Lucero}, \citenamefont {Megrant},
		\citenamefont {Mutus}, \citenamefont {Neeley}, \citenamefont {Neill},
		\citenamefont {Quintana}, \citenamefont {Sank}, \citenamefont {Vainsencher},
		\citenamefont {Wenner}, \citenamefont {White}, \citenamefont {Coveney},
		\citenamefont {Love}, \citenamefont {Neven}, \citenamefont {Aspuru-Guzik},\
		and\ \citenamefont {Martinis}}]{PRXH2}%
	\BibitemOpen
	\bibfield  {author} {\bibinfo {author} {\bibfnamefont {P.~J.~J.}\
			\bibnamefont {O'Malley}}, \bibinfo {author} {\bibfnamefont {R.}~\bibnamefont
			{Babbush}}, \bibinfo {author} {\bibfnamefont {I.~D.}\ \bibnamefont
			{Kivlichan}}, \bibinfo {author} {\bibfnamefont {J.}~\bibnamefont {Romero}},
		\bibinfo {author} {\bibfnamefont {J.~R.}\ \bibnamefont {McClean}}, \bibinfo
		{author} {\bibfnamefont {R.}~\bibnamefont {Barends}}, \bibinfo {author}
		{\bibfnamefont {J.}~\bibnamefont {Kelly}}, \bibinfo {author} {\bibfnamefont
			{P.}~\bibnamefont {Roushan}}, \bibinfo {author} {\bibfnamefont
			{A.}~\bibnamefont {Tranter}}, \bibinfo {author} {\bibfnamefont
			{N.}~\bibnamefont {Ding}}, \bibinfo {author} {\bibfnamefont {B.}~\bibnamefont
			{Campbell}}, \bibinfo {author} {\bibfnamefont {Y.}~\bibnamefont {Chen}},
		\bibinfo {author} {\bibfnamefont {Z.}~\bibnamefont {Chen}}, \bibinfo {author}
		{\bibfnamefont {B.}~\bibnamefont {Chiaro}}, \bibinfo {author} {\bibfnamefont
			{A.}~\bibnamefont {Dunsworth}}, \bibinfo {author} {\bibfnamefont {A.~G.}\
			\bibnamefont {Fowler}}, \bibinfo {author} {\bibfnamefont {E.}~\bibnamefont
			{Jeffrey}}, \bibinfo {author} {\bibfnamefont {E.}~\bibnamefont {Lucero}},
		\bibinfo {author} {\bibfnamefont {A.}~\bibnamefont {Megrant}}, \bibinfo
		{author} {\bibfnamefont {J.~Y.}\ \bibnamefont {Mutus}}, \bibinfo {author}
		{\bibfnamefont {M.}~\bibnamefont {Neeley}}, \bibinfo {author} {\bibfnamefont
			{C.}~\bibnamefont {Neill}}, \bibinfo {author} {\bibfnamefont
			{C.}~\bibnamefont {Quintana}}, \bibinfo {author} {\bibfnamefont
			{D.}~\bibnamefont {Sank}}, \bibinfo {author} {\bibfnamefont {A.}~\bibnamefont
			{Vainsencher}}, \bibinfo {author} {\bibfnamefont {J.}~\bibnamefont {Wenner}},
		\bibinfo {author} {\bibfnamefont {T.~C.}\ \bibnamefont {White}}, \bibinfo
		{author} {\bibfnamefont {P.~V.}\ \bibnamefont {Coveney}}, \bibinfo {author}
		{\bibfnamefont {P.~J.}\ \bibnamefont {Love}}, \bibinfo {author}
		{\bibfnamefont {H.}~\bibnamefont {Neven}}, \bibinfo {author} {\bibfnamefont
			{A.}~\bibnamefont {Aspuru-Guzik}},\ and\ \bibinfo {author} {\bibfnamefont
			{J.~M.}\ \bibnamefont {Martinis}},\ }\bibfield  {title} {\bibinfo {title}
		{{Scalable Quantum Simulation of Molecular Energies}},\ }\href
	{https://doi.org/10.1103/PhysRevX.6.031007} {\bibfield  {journal} {\bibinfo
			{journal} {Phys. Rev. X}\ }\textbf {\bibinfo {volume} {6}},\ \bibinfo {pages}
		{031007} (\bibinfo {year} {2016})}\BibitemShut {NoStop}%
	\bibitem [{\citenamefont {Shen}\ \emph {et~al.}(2017)\citenamefont {Shen},
		\citenamefont {Zhang}, \citenamefont {Zhang}, \citenamefont {Zhang},
		\citenamefont {Yung},\ and\ \citenamefont {Kim}}]{PhysRevA.95.020501}%
	\BibitemOpen
	\bibfield  {author} {\bibinfo {author} {\bibfnamefont {Y.}~\bibnamefont
			{Shen}}, \bibinfo {author} {\bibfnamefont {X.}~\bibnamefont {Zhang}},
		\bibinfo {author} {\bibfnamefont {S.}~\bibnamefont {Zhang}}, \bibinfo
		{author} {\bibfnamefont {J.-N.}\ \bibnamefont {Zhang}}, \bibinfo {author}
		{\bibfnamefont {M.-H.}\ \bibnamefont {Yung}},\ and\ \bibinfo {author}
		{\bibfnamefont {K.}~\bibnamefont {Kim}},\ }\bibfield  {title} {\bibinfo
		{title} {Quantum implementation of the unitary coupled cluster for simulating
			molecular electronic structure},\ }\href
	{https://doi.org/10.1103/PhysRevA.95.020501} {\bibfield  {journal} {\bibinfo
			{journal} {Phys. Rev. A}\ }\textbf {\bibinfo {volume} {95}},\ \bibinfo
		{pages} {020501} (\bibinfo {year} {2017})}\BibitemShut {NoStop}%
	\bibitem [{\citenamefont {McClean}\ \emph {et~al.}(2016)\citenamefont
		{McClean}, \citenamefont {Romero}, \citenamefont {Babbush},\ and\
		\citenamefont {Aspuru-Guzik}}]{mcclean2016theory}%
	\BibitemOpen
	\bibfield  {author} {\bibinfo {author} {\bibfnamefont {J.~R.}\ \bibnamefont
			{McClean}}, \bibinfo {author} {\bibfnamefont {J.}~\bibnamefont {Romero}},
		\bibinfo {author} {\bibfnamefont {R.}~\bibnamefont {Babbush}},\ and\ \bibinfo
		{author} {\bibfnamefont {A.}~\bibnamefont {Aspuru-Guzik}},\ }\bibfield
	{title} {\bibinfo {title} {The theory of variational hybrid quantum-classical
			algorithms},\ }\href@noop {} {\bibfield  {journal} {\bibinfo  {journal} {New
				J. Phys.}\ }\textbf {\bibinfo {volume} {18}},\ \bibinfo {pages} {023023}
		(\bibinfo {year} {2016})}\BibitemShut {NoStop}%
	\bibitem [{\citenamefont {Paesani}\ \emph {et~al.}(2017)\citenamefont
		{Paesani}, \citenamefont {Gentile}, \citenamefont {Santagati}, \citenamefont
		{Wang}, \citenamefont {Wiebe}, \citenamefont {Tew}, \citenamefont {O'Brien},\
		and\ \citenamefont {Thompson}}]{PhysRevLett.118.100503}%
	\BibitemOpen
	\bibfield  {author} {\bibinfo {author} {\bibfnamefont {S.}~\bibnamefont
			{Paesani}}, \bibinfo {author} {\bibfnamefont {A.~A.}\ \bibnamefont
			{Gentile}}, \bibinfo {author} {\bibfnamefont {R.}~\bibnamefont {Santagati}},
		\bibinfo {author} {\bibfnamefont {J.}~\bibnamefont {Wang}}, \bibinfo {author}
		{\bibfnamefont {N.}~\bibnamefont {Wiebe}}, \bibinfo {author} {\bibfnamefont
			{D.~P.}\ \bibnamefont {Tew}}, \bibinfo {author} {\bibfnamefont {J.~L.}\
			\bibnamefont {O'Brien}},\ and\ \bibinfo {author} {\bibfnamefont {M.~G.}\
			\bibnamefont {Thompson}},\ }\bibfield  {title} {\bibinfo {title}
		{{Experimental Bayesian Quantum Phase Estimation on a Silicon Photonic
				Chip}},\ }\href {https://doi.org/10.1103/PhysRevLett.118.100503} {\bibfield
		{journal} {\bibinfo  {journal} {Phys. Rev. Lett.}\ }\textbf {\bibinfo
			{volume} {118}},\ \bibinfo {pages} {100503} (\bibinfo {year}
		{2017})}\BibitemShut {NoStop}%
	\bibitem [{\citenamefont {Li}\ and\ \citenamefont
		{Benjamin}(2017{\natexlab{a}})}]{Li2017}%
	\BibitemOpen
	\bibfield  {author} {\bibinfo {author} {\bibfnamefont {Y.}~\bibnamefont
			{Li}}\ and\ \bibinfo {author} {\bibfnamefont {S.~C.}\ \bibnamefont
			{Benjamin}},\ }\bibfield  {title} {\bibinfo {title} {{Efficient Variational
				Quantum Simulator Incorporating Active Error Minimization}},\ }\href
	{https://doi.org/10.1103/PhysRevX.7.021050} {\bibfield  {journal} {\bibinfo
			{journal} {Phys. Rev. X}\ }\textbf {\bibinfo {volume} {7}},\ \bibinfo {pages}
		{021050} (\bibinfo {year} {2017}{\natexlab{a}})}\BibitemShut {NoStop}%
	\bibitem [{\citenamefont {Colless}\ \emph {et~al.}(2018)\citenamefont
		{Colless}, \citenamefont {Ramasesh}, \citenamefont {Dahlen}, \citenamefont
		{Blok}, \citenamefont {Kimchi-Schwartz}, \citenamefont {McClean},
		\citenamefont {Carter}, \citenamefont {de~Jong},\ and\ \citenamefont
		{Siddiqi}}]{PhysRevX.8.011021}%
	\BibitemOpen
	\bibfield  {author} {\bibinfo {author} {\bibfnamefont {J.~I.}\ \bibnamefont
			{Colless}}, \bibinfo {author} {\bibfnamefont {V.~V.}\ \bibnamefont
			{Ramasesh}}, \bibinfo {author} {\bibfnamefont {D.}~\bibnamefont {Dahlen}},
		\bibinfo {author} {\bibfnamefont {M.~S.}\ \bibnamefont {Blok}}, \bibinfo
		{author} {\bibfnamefont {M.~E.}\ \bibnamefont {Kimchi-Schwartz}}, \bibinfo
		{author} {\bibfnamefont {J.~R.}\ \bibnamefont {McClean}}, \bibinfo {author}
		{\bibfnamefont {J.}~\bibnamefont {Carter}}, \bibinfo {author} {\bibfnamefont
			{W.~A.}\ \bibnamefont {de~Jong}},\ and\ \bibinfo {author} {\bibfnamefont
			{I.}~\bibnamefont {Siddiqi}},\ }\bibfield  {title} {\bibinfo {title}
		{{Computation of Molecular Spectra on a Quantum Processor with an
				Error-Resilient Algorithm}},\ }\href
	{https://doi.org/10.1103/PhysRevX.8.011021} {\bibfield  {journal} {\bibinfo
			{journal} {Phys. Rev. X}\ }\textbf {\bibinfo {volume} {8}},\ \bibinfo {pages}
		{011021} (\bibinfo {year} {2018})}\BibitemShut {NoStop}%
	\bibitem [{\citenamefont {Santagati}\ \emph {et~al.}(2018)\citenamefont
		{Santagati}, \citenamefont {Wang}, \citenamefont {Gentile}, \citenamefont
		{Paesani}, \citenamefont {Wiebe}, \citenamefont {McClean}, \citenamefont
		{Morley-Short}, \citenamefont {Shadbolt}, \citenamefont {Bonneau},
		\citenamefont {Silverstone}, \citenamefont {Tew}, \citenamefont {Zhou},
		\citenamefont {O{\textquoteright}Brien},\ and\ \citenamefont
		{Thompson}}]{Santagatieaap9646}%
	\BibitemOpen
	\bibfield  {author} {\bibinfo {author} {\bibfnamefont {R.}~\bibnamefont
			{Santagati}}, \bibinfo {author} {\bibfnamefont {J.}~\bibnamefont {Wang}},
		\bibinfo {author} {\bibfnamefont {A.~A.}\ \bibnamefont {Gentile}}, \bibinfo
		{author} {\bibfnamefont {S.}~\bibnamefont {Paesani}}, \bibinfo {author}
		{\bibfnamefont {N.}~\bibnamefont {Wiebe}}, \bibinfo {author} {\bibfnamefont
			{J.~R.}\ \bibnamefont {McClean}}, \bibinfo {author} {\bibfnamefont
			{S.}~\bibnamefont {Morley-Short}}, \bibinfo {author} {\bibfnamefont {P.~J.}\
			\bibnamefont {Shadbolt}}, \bibinfo {author} {\bibfnamefont {D.}~\bibnamefont
			{Bonneau}}, \bibinfo {author} {\bibfnamefont {J.~W.}\ \bibnamefont
			{Silverstone}}, \bibinfo {author} {\bibfnamefont {D.~P.}\ \bibnamefont
			{Tew}}, \bibinfo {author} {\bibfnamefont {X.}~\bibnamefont {Zhou}}, \bibinfo
		{author} {\bibfnamefont {J.~L.}\ \bibnamefont {O{\textquoteright}Brien}},\
		and\ \bibinfo {author} {\bibfnamefont {M.~G.}\ \bibnamefont {Thompson}},\
	}\bibfield  {title} {\bibinfo {title} {{Witnessing eigenstates for quantum
				simulation of Hamiltonian spectra}},\ }\bibfield  {journal} {\bibinfo
		{journal} {Science Advances}\ }\textbf {\bibinfo {volume} {4}},\ \href
	{https://doi.org/10.1126/sciadv.aap9646} {10.1126/sciadv.aap9646} (\bibinfo
	{year} {2018})\BibitemShut {NoStop}%
	\bibitem [{\citenamefont {Kandala}\ \emph {et~al.}(2017)\citenamefont
		{Kandala}, \citenamefont {Mezzacapo}, \citenamefont {Temme}, \citenamefont
		{Takita}, \citenamefont {Brink}, \citenamefont {Chow},\ and\ \citenamefont
		{Gambetta}}]{kandala2017hardware}%
	\BibitemOpen
	\bibfield  {author} {\bibinfo {author} {\bibfnamefont {A.}~\bibnamefont
			{Kandala}}, \bibinfo {author} {\bibfnamefont {A.}~\bibnamefont {Mezzacapo}},
		\bibinfo {author} {\bibfnamefont {K.}~\bibnamefont {Temme}}, \bibinfo
		{author} {\bibfnamefont {M.}~\bibnamefont {Takita}}, \bibinfo {author}
		{\bibfnamefont {M.}~\bibnamefont {Brink}}, \bibinfo {author} {\bibfnamefont
			{J.~M.}\ \bibnamefont {Chow}},\ and\ \bibinfo {author} {\bibfnamefont
			{J.~M.}\ \bibnamefont {Gambetta}},\ }\bibfield  {title} {\bibinfo {title}
		{Hardware-efficient variational quantum eigensolver for small molecules and
			quantum magnets},\ }\href@noop {} {\bibfield  {journal} {\bibinfo  {journal}
			{Nature}\ }\textbf {\bibinfo {volume} {549}},\ \bibinfo {pages} {242}
		(\bibinfo {year} {2017})}\BibitemShut {NoStop}%
	\bibitem [{\citenamefont {Kandala}\ \emph {et~al.}(2019)\citenamefont
		{Kandala}, \citenamefont {Temme}, \citenamefont {C{\'o}rcoles}, \citenamefont
		{Mezzacapo}, \citenamefont {Chow},\ and\ \citenamefont
		{Gambetta}}]{kandala2018extending}%
	\BibitemOpen
	\bibfield  {author} {\bibinfo {author} {\bibfnamefont {A.}~\bibnamefont
			{Kandala}}, \bibinfo {author} {\bibfnamefont {K.}~\bibnamefont {Temme}},
		\bibinfo {author} {\bibfnamefont {A.~D.}\ \bibnamefont {C{\'o}rcoles}},
		\bibinfo {author} {\bibfnamefont {A.}~\bibnamefont {Mezzacapo}}, \bibinfo
		{author} {\bibfnamefont {J.~M.}\ \bibnamefont {Chow}},\ and\ \bibinfo
		{author} {\bibfnamefont {J.~M.}\ \bibnamefont {Gambetta}},\ }\bibfield
	{title} {\bibinfo {title} {Error mitigation extends the computational reach
			of a noisy quantum processor},\ }\href@noop {} {\bibfield  {journal}
		{\bibinfo  {journal} {Nature}\ }\textbf {\bibinfo {volume} {567}},\ \bibinfo
		{pages} {491} (\bibinfo {year} {2019})}\BibitemShut {NoStop}%
	\bibitem [{\citenamefont {Hempel}\ \emph {et~al.}(2018)\citenamefont {Hempel},
		\citenamefont {Maier}, \citenamefont {Romero}, \citenamefont {McClean},
		\citenamefont {Monz}, \citenamefont {Shen}, \citenamefont {Jurcevic},
		\citenamefont {Lanyon}, \citenamefont {Love}, \citenamefont {Babbush},
		\citenamefont {Aspuru-Guzik}, \citenamefont {Blatt},\ and\ \citenamefont
		{Roos}}]{PhysRevX.8.031022}%
	\BibitemOpen
	\bibfield  {author} {\bibinfo {author} {\bibfnamefont {C.}~\bibnamefont
			{Hempel}}, \bibinfo {author} {\bibfnamefont {C.}~\bibnamefont {Maier}},
		\bibinfo {author} {\bibfnamefont {J.}~\bibnamefont {Romero}}, \bibinfo
		{author} {\bibfnamefont {J.}~\bibnamefont {McClean}}, \bibinfo {author}
		{\bibfnamefont {T.}~\bibnamefont {Monz}}, \bibinfo {author} {\bibfnamefont
			{H.}~\bibnamefont {Shen}}, \bibinfo {author} {\bibfnamefont {P.}~\bibnamefont
			{Jurcevic}}, \bibinfo {author} {\bibfnamefont {B.~P.}\ \bibnamefont
			{Lanyon}}, \bibinfo {author} {\bibfnamefont {P.}~\bibnamefont {Love}},
		\bibinfo {author} {\bibfnamefont {R.}~\bibnamefont {Babbush}}, \bibinfo
		{author} {\bibfnamefont {A.}~\bibnamefont {Aspuru-Guzik}}, \bibinfo {author}
		{\bibfnamefont {R.}~\bibnamefont {Blatt}},\ and\ \bibinfo {author}
		{\bibfnamefont {C.~F.}\ \bibnamefont {Roos}},\ }\bibfield  {title} {\bibinfo
		{title} {{Quantum Chemistry Calculations on a Trapped-Ion Quantum
				Simulator}},\ }\href {https://doi.org/10.1103/PhysRevX.8.031022} {\bibfield
		{journal} {\bibinfo  {journal} {Phys. Rev. X}\ }\textbf {\bibinfo {volume}
			{8}},\ \bibinfo {pages} {031022} (\bibinfo {year} {2018})}\BibitemShut
	{NoStop}%
	\bibitem [{\citenamefont {Romero}\ \emph {et~al.}(2017)\citenamefont {Romero},
		\citenamefont {Babbush}, \citenamefont {McClean}, \citenamefont {Hempel},
		\citenamefont {Love},\ and\ \citenamefont
		{Aspuru-Guzik}}]{romero2017strategies}%
	\BibitemOpen
	\bibfield  {author} {\bibinfo {author} {\bibfnamefont {J.}~\bibnamefont
			{Romero}}, \bibinfo {author} {\bibfnamefont {R.}~\bibnamefont {Babbush}},
		\bibinfo {author} {\bibfnamefont {J.~R.}\ \bibnamefont {McClean}}, \bibinfo
		{author} {\bibfnamefont {C.}~\bibnamefont {Hempel}}, \bibinfo {author}
		{\bibfnamefont {P.}~\bibnamefont {Love}},\ and\ \bibinfo {author}
		{\bibfnamefont {A.}~\bibnamefont {Aspuru-Guzik}},\ }\bibfield  {title}
	{\bibinfo {title} {Strategies for quantum computing molecular energies using
			the unitary coupled cluster ansatz},\ }\href@noop {} {\bibfield  {journal}
		{\bibinfo  {journal} {arXiv preprint arXiv:1701.02691}\ } (\bibinfo {year}
		{2017})}\BibitemShut {NoStop}%
	\bibitem [{\citenamefont {Higgott}\ \emph {et~al.}(2018)\citenamefont
		{Higgott}, \citenamefont {Wang},\ and\ \citenamefont
		{Brierley}}]{higgott2018variational}%
	\BibitemOpen
	\bibfield  {author} {\bibinfo {author} {\bibfnamefont {O.}~\bibnamefont
			{Higgott}}, \bibinfo {author} {\bibfnamefont {D.}~\bibnamefont {Wang}},\ and\
		\bibinfo {author} {\bibfnamefont {S.}~\bibnamefont {Brierley}},\ }\bibfield
	{title} {\bibinfo {title} {{Variational Quantum Computation of Excited
				States}},\ }\href@noop {} {\bibfield  {journal} {\bibinfo  {journal} {arXiv
				preprint arXiv:1805.08138}\ } (\bibinfo {year} {2018})}\BibitemShut {NoStop}%
	\bibitem [{\citenamefont {Jones}\ \emph
		{et~al.}(2019{\natexlab{a}})\citenamefont {Jones}, \citenamefont {Endo},
		\citenamefont {McArdle}, \citenamefont {Yuan},\ and\ \citenamefont
		{Benjamin}}]{SuguruExc}%
	\BibitemOpen
	\bibfield  {author} {\bibinfo {author} {\bibfnamefont {T.}~\bibnamefont
			{Jones}}, \bibinfo {author} {\bibfnamefont {S.}~\bibnamefont {Endo}},
		\bibinfo {author} {\bibfnamefont {S.}~\bibnamefont {McArdle}}, \bibinfo
		{author} {\bibfnamefont {X.}~\bibnamefont {Yuan}},\ and\ \bibinfo {author}
		{\bibfnamefont {S.~C.}\ \bibnamefont {Benjamin}},\ }\bibfield  {title}
	{\bibinfo {title} {{Variational quantum algorithms for discovering
				Hamiltonian spectra}},\ }\href {https://doi.org/10.1103/PhysRevA.99.062304}
	{\bibfield  {journal} {\bibinfo  {journal} {Phys. Rev. A}\ }\textbf {\bibinfo
			{volume} {99}},\ \bibinfo {pages} {062304} (\bibinfo {year}
		{2019}{\natexlab{a}})}\BibitemShut {NoStop}%
	\bibitem [{\citenamefont {McClean}\ \emph {et~al.}(2017)\citenamefont
		{McClean}, \citenamefont {Kimchi-Schwartz}, \citenamefont {Carter},\ and\
		\citenamefont {de~Jong}}]{mcclean2017hybrid}%
	\BibitemOpen
	\bibfield  {author} {\bibinfo {author} {\bibfnamefont {J.~R.}\ \bibnamefont
			{McClean}}, \bibinfo {author} {\bibfnamefont {M.~E.}\ \bibnamefont
			{Kimchi-Schwartz}}, \bibinfo {author} {\bibfnamefont {J.}~\bibnamefont
			{Carter}},\ and\ \bibinfo {author} {\bibfnamefont {W.~A.}\ \bibnamefont
			{de~Jong}},\ }\bibfield  {title} {\bibinfo {title} {Hybrid quantum-classical
			hierarchy for mitigation of decoherence and determination of excited
			states},\ }\href@noop {} {\bibfield  {journal} {\bibinfo  {journal} {Physical
				Review A}\ }\textbf {\bibinfo {volume} {95}},\ \bibinfo {pages} {042308}
		(\bibinfo {year} {2017})}\BibitemShut {NoStop}%
	\bibitem [{\citenamefont {Colless}\ \emph {et~al.}(2017)\citenamefont
		{Colless}, \citenamefont {Ramasesh}, \citenamefont {Dahlen}, \citenamefont
		{Blok}, \citenamefont {McClean}, \citenamefont {Carter}, \citenamefont
		{de~Jong},\ and\ \citenamefont {Siddiqi}}]{colless2017robust}%
	\BibitemOpen
	\bibfield  {author} {\bibinfo {author} {\bibfnamefont {J.~I.}\ \bibnamefont
			{Colless}}, \bibinfo {author} {\bibfnamefont {V.~V.}\ \bibnamefont
			{Ramasesh}}, \bibinfo {author} {\bibfnamefont {D.}~\bibnamefont {Dahlen}},
		\bibinfo {author} {\bibfnamefont {M.~S.}\ \bibnamefont {Blok}}, \bibinfo
		{author} {\bibfnamefont {J.~R.}\ \bibnamefont {McClean}}, \bibinfo {author}
		{\bibfnamefont {J.}~\bibnamefont {Carter}}, \bibinfo {author} {\bibfnamefont
			{W.~A.}\ \bibnamefont {de~Jong}},\ and\ \bibinfo {author} {\bibfnamefont
			{I.}~\bibnamefont {Siddiqi}},\ }\bibfield  {title} {\bibinfo {title} {Robust
			determination of molecular spectra on a quantum processor},\ }\href@noop {}
	{\bibfield  {journal} {\bibinfo  {journal} {arXiv preprint arXiv:1707.06408}\
		} (\bibinfo {year} {2017})}\BibitemShut {NoStop}%
	\bibitem [{\citenamefont {Kokail}\ \emph {et~al.}(2018)\citenamefont {Kokail},
		\citenamefont {Maier}, \citenamefont {van Bijnen}, \citenamefont {Brydges},
		\citenamefont {Joshi}, \citenamefont {Jurcevic}, \citenamefont {Muschik},
		\citenamefont {Silvi}, \citenamefont {Blatt}, \citenamefont {Roos} \emph
		{et~al.}}]{kokail2018self}%
	\BibitemOpen
	\bibfield  {author} {\bibinfo {author} {\bibfnamefont {C.}~\bibnamefont
			{Kokail}}, \bibinfo {author} {\bibfnamefont {C.}~\bibnamefont {Maier}},
		\bibinfo {author} {\bibfnamefont {R.}~\bibnamefont {van Bijnen}}, \bibinfo
		{author} {\bibfnamefont {T.}~\bibnamefont {Brydges}}, \bibinfo {author}
		{\bibfnamefont {M.~K.}\ \bibnamefont {Joshi}}, \bibinfo {author}
		{\bibfnamefont {P.}~\bibnamefont {Jurcevic}}, \bibinfo {author}
		{\bibfnamefont {C.~A.}\ \bibnamefont {Muschik}}, \bibinfo {author}
		{\bibfnamefont {P.}~\bibnamefont {Silvi}}, \bibinfo {author} {\bibfnamefont
			{R.}~\bibnamefont {Blatt}}, \bibinfo {author} {\bibfnamefont {C.~F.}\
			\bibnamefont {Roos}}, \emph {et~al.},\ }\bibfield  {title} {\bibinfo {title}
		{{Self-Verifying Variational Quantum Simulation of the Lattice Schwinger
				Model}},\ }\href@noop {} {\bibfield  {journal} {\bibinfo  {journal} {arXiv
				preprint arXiv:1810.03421}\ } (\bibinfo {year} {2018})}\BibitemShut {NoStop}%
	\bibitem [{\citenamefont {Sharma}\ \emph {et~al.}(2020)\citenamefont {Sharma},
		\citenamefont {Khatri}, \citenamefont {Cerezo},\ and\ \citenamefont
		{Coles}}]{sharma2020noise}%
	\BibitemOpen
	\bibfield  {author} {\bibinfo {author} {\bibfnamefont {K.}~\bibnamefont
			{Sharma}}, \bibinfo {author} {\bibfnamefont {S.}~\bibnamefont {Khatri}},
		\bibinfo {author} {\bibfnamefont {M.}~\bibnamefont {Cerezo}},\ and\ \bibinfo
		{author} {\bibfnamefont {P.~J.}\ \bibnamefont {Coles}},\ }\bibfield  {title}
	{\bibinfo {title} {{Noise resilience of variational quantum compiling}},\
	}\href@noop {} {\bibfield  {journal} {\bibinfo  {journal} {New Journal of
				Physics}\ }\textbf {\bibinfo {volume} {22}},\ \bibinfo {pages} {043006}
		(\bibinfo {year} {2020})}\BibitemShut {NoStop}%
	\bibitem [{\citenamefont {Cerezo}\ \emph
		{et~al.}(2020{\natexlab{a}})\citenamefont {Cerezo}, \citenamefont {Sharma},
		\citenamefont {Arrasmith},\ and\ \citenamefont
		{Coles}}]{cerezo2020variational}%
	\BibitemOpen
	\bibfield  {author} {\bibinfo {author} {\bibfnamefont {M.}~\bibnamefont
			{Cerezo}}, \bibinfo {author} {\bibfnamefont {K.}~\bibnamefont {Sharma}},
		\bibinfo {author} {\bibfnamefont {A.}~\bibnamefont {Arrasmith}},\ and\
		\bibinfo {author} {\bibfnamefont {P.~J.}\ \bibnamefont {Coles}},\ }\bibfield
	{title} {\bibinfo {title} {{Variational Quantum State Eigensolver}},\
	}\href@noop {} {\bibfield  {journal} {\bibinfo  {journal} {arXiv preprint
				arXiv:2004.01372}\ } (\bibinfo {year} {2020}{\natexlab{a}})}\BibitemShut
	{NoStop}%
	\bibitem [{\citenamefont {Endo}\ \emph {et~al.}(2020)\citenamefont {Endo},
		\citenamefont {Sun}, \citenamefont {Li}, \citenamefont {Benjamin},\ and\
		\citenamefont {Yuan}}]{SuguruGeneral}%
	\BibitemOpen
	\bibfield  {author} {\bibinfo {author} {\bibfnamefont {S.}~\bibnamefont
			{Endo}}, \bibinfo {author} {\bibfnamefont {J.}~\bibnamefont {Sun}}, \bibinfo
		{author} {\bibfnamefont {Y.}~\bibnamefont {Li}}, \bibinfo {author}
		{\bibfnamefont {S.~C.}\ \bibnamefont {Benjamin}},\ and\ \bibinfo {author}
		{\bibfnamefont {X.}~\bibnamefont {Yuan}},\ }\bibfield  {title} {\bibinfo
		{title} {Variational quantum simulation of general processes},\ }\href
	{https://doi.org/10.1103/PhysRevLett.125.010501} {\bibfield  {journal}
		{\bibinfo  {journal} {Phys. Rev. Lett.}\ }\textbf {\bibinfo {volume} {125}},\
		\bibinfo {pages} {010501} (\bibinfo {year} {2020})}\BibitemShut {NoStop}%
	\bibitem [{\citenamefont {Koczor}\ and\ \citenamefont
		{Benjamin}(2020)}]{koczor2020quantum}%
	\BibitemOpen
	\bibfield  {author} {\bibinfo {author} {\bibfnamefont {B.}~\bibnamefont
			{Koczor}}\ and\ \bibinfo {author} {\bibfnamefont {S.~C.}\ \bibnamefont
			{Benjamin}},\ }\bibfield  {title} {\bibinfo {title} {Quantum analytic
			descent},\ }\href@noop {} {\bibfield  {journal} {\bibinfo  {journal} {arXiv
				preprint arXiv:2008.13774}\ } (\bibinfo {year} {2020})}\BibitemShut {NoStop}%
	\bibitem [{\citenamefont {Koczor}(2020)}]{koczor2020exponential}%
	\BibitemOpen
	\bibfield  {author} {\bibinfo {author} {\bibfnamefont {B.}~\bibnamefont
			{Koczor}},\ }\bibfield  {title} {\bibinfo {title} {Exponential error
			suppression for near-term quantum devices},\ }\href@noop {} {\bibfield
		{journal} {\bibinfo  {journal} {Phys. Rev. X (in production)
				arXiv:2011.05942}\ } (\bibinfo {year} {2020})}\BibitemShut {NoStop}%
	\bibitem [{\citenamefont {Preskill}(2018)}]{preskill2018quantum}%
	\BibitemOpen
	\bibfield  {author} {\bibinfo {author} {\bibfnamefont {J.}~\bibnamefont
			{Preskill}},\ }\bibfield  {title} {\bibinfo {title} {{Quantum Computing in
				the NISQ era and beyond}},\ }\href@noop {} {\bibfield  {journal} {\bibinfo
			{journal} {arXiv preprint arXiv:1801.00862}\ } (\bibinfo {year}
		{2018})}\BibitemShut {NoStop}%
	\bibitem [{\citenamefont {Kassal}\ \emph {et~al.}(2011)\citenamefont {Kassal},
		\citenamefont {Whitfield}, \citenamefont {Perdomo-Ortiz}, \citenamefont
		{Yung},\ and\ \citenamefont {Aspuru-Guzik}}]{kassal2011simulating}%
	\BibitemOpen
	\bibfield  {author} {\bibinfo {author} {\bibfnamefont {I.}~\bibnamefont
			{Kassal}}, \bibinfo {author} {\bibfnamefont {J.~D.}\ \bibnamefont
			{Whitfield}}, \bibinfo {author} {\bibfnamefont {A.}~\bibnamefont
			{Perdomo-Ortiz}}, \bibinfo {author} {\bibfnamefont {M.-H.}\ \bibnamefont
			{Yung}},\ and\ \bibinfo {author} {\bibfnamefont {A.}~\bibnamefont
			{Aspuru-Guzik}},\ }\bibfield  {title} {\bibinfo {title} {Simulating chemistry
			using quantum computers},\ }\href@noop {} {\bibfield  {journal} {\bibinfo
			{journal} {Annual review of physical chemistry}\ }\textbf {\bibinfo {volume}
			{62}},\ \bibinfo {pages} {185} (\bibinfo {year} {2011})}\BibitemShut
	{NoStop}%
	\bibitem [{\citenamefont {Lu}\ \emph {et~al.}(2012)\citenamefont {Lu},
		\citenamefont {Xu}, \citenamefont {Xu}, \citenamefont {Li}, \citenamefont
		{Chen}, \citenamefont {Peng}, \citenamefont {Xu},\ and\ \citenamefont
		{Du}}]{C2CP23700H}%
	\BibitemOpen
	\bibfield  {author} {\bibinfo {author} {\bibfnamefont {D.}~\bibnamefont
			{Lu}}, \bibinfo {author} {\bibfnamefont {B.}~\bibnamefont {Xu}}, \bibinfo
		{author} {\bibfnamefont {N.}~\bibnamefont {Xu}}, \bibinfo {author}
		{\bibfnamefont {Z.}~\bibnamefont {Li}}, \bibinfo {author} {\bibfnamefont
			{H.}~\bibnamefont {Chen}}, \bibinfo {author} {\bibfnamefont {X.}~\bibnamefont
			{Peng}}, \bibinfo {author} {\bibfnamefont {R.}~\bibnamefont {Xu}},\ and\
		\bibinfo {author} {\bibfnamefont {J.}~\bibnamefont {Du}},\ }\bibfield
	{title} {\bibinfo {title} {Quantum chemistry simulation on quantum computers:
			theories and experiments},\ }\href {https://doi.org/10.1039/C2CP23700H}
	{\bibfield  {journal} {\bibinfo  {journal} {Phys. Chem. Chem. Phys.}\
		}\textbf {\bibinfo {volume} {14}},\ \bibinfo {pages} {9411} (\bibinfo {year}
		{2012})}\BibitemShut {NoStop}%
	\bibitem [{\citenamefont {Whaley}\ \emph {et~al.}(2014)\citenamefont {Whaley},
		\citenamefont {Dinner},\ and\ \citenamefont {Rice}}]{whaley2014quantum}%
	\BibitemOpen
	\bibfield  {author} {\bibinfo {author} {\bibfnamefont {K.~B.}\ \bibnamefont
			{Whaley}}, \bibinfo {author} {\bibfnamefont {A.~R.}\ \bibnamefont {Dinner}},\
		and\ \bibinfo {author} {\bibfnamefont {S.~A.}\ \bibnamefont {Rice}},\
	}\href@noop {} {\emph {\bibinfo {title} {{Quantum information and computation
					for chemistry}}}}\ (\bibinfo  {publisher} {John Wiley \& Sons},\ \bibinfo
	{year} {2014})\BibitemShut {NoStop}%
	\bibitem [{\citenamefont {McArdle}\ \emph {et~al.}(2018)\citenamefont
		{McArdle}, \citenamefont {Endo}, \citenamefont {Aspuru-Guzik}, \citenamefont
		{Benjamin},\ and\ \citenamefont {Yuan}}]{ourReview}%
	\BibitemOpen
	\bibfield  {author} {\bibinfo {author} {\bibfnamefont {S.}~\bibnamefont
			{McArdle}}, \bibinfo {author} {\bibfnamefont {S.}~\bibnamefont {Endo}},
		\bibinfo {author} {\bibfnamefont {A.}~\bibnamefont {Aspuru-Guzik}}, \bibinfo
		{author} {\bibfnamefont {S.}~\bibnamefont {Benjamin}},\ and\ \bibinfo
		{author} {\bibfnamefont {X.}~\bibnamefont {Yuan}},\ }\bibfield  {title}
	{\bibinfo {title} {{Quantum computational chemistry}},\ }\href@noop {}
	{\bibfield  {journal} {\bibinfo  {journal} {arXiv preprint arXiv:1808.10402}\
		} (\bibinfo {year} {2018})}\BibitemShut {NoStop}%
	\bibitem [{\citenamefont {Quantum}\ \emph {et~al.}(2020)\citenamefont {Quantum}
		\emph {et~al.}}]{google2020hartree}%
	\BibitemOpen
	\bibfield  {author} {\bibinfo {author} {\bibfnamefont {G.~A.}\ \bibnamefont
			{Quantum}} \emph {et~al.},\ }\bibfield  {title} {\bibinfo {title}
		{{Hartree-Fock on a superconducting qubit quantum computer}},\ }\href@noop {}
	{\bibfield  {journal} {\bibinfo  {journal} {Science}\ }\textbf {\bibinfo
			{volume} {369}},\ \bibinfo {pages} {1084} (\bibinfo {year}
		{2020})}\BibitemShut {NoStop}%
	\bibitem [{\citenamefont {Sweke}\ \emph {et~al.}(2019)\citenamefont {Sweke},
		\citenamefont {Wilde}, \citenamefont {Meyer}, \citenamefont {Schuld},
		\citenamefont {F{\"a}hrmann}, \citenamefont {Meynard-Piganeau},\ and\
		\citenamefont {Eisert}}]{sweke2019stochastic}%
	\BibitemOpen
	\bibfield  {author} {\bibinfo {author} {\bibfnamefont {R.}~\bibnamefont
			{Sweke}}, \bibinfo {author} {\bibfnamefont {F.}~\bibnamefont {Wilde}},
		\bibinfo {author} {\bibfnamefont {J.}~\bibnamefont {Meyer}}, \bibinfo
		{author} {\bibfnamefont {M.}~\bibnamefont {Schuld}}, \bibinfo {author}
		{\bibfnamefont {P.~K.}\ \bibnamefont {F{\"a}hrmann}}, \bibinfo {author}
		{\bibfnamefont {B.}~\bibnamefont {Meynard-Piganeau}},\ and\ \bibinfo {author}
		{\bibfnamefont {J.}~\bibnamefont {Eisert}},\ }\bibfield  {title} {\bibinfo
		{title} {Stochastic gradient descent for hybrid quantum-classical
			optimization},\ }\href@noop {} {\bibfield  {journal} {\bibinfo  {journal}
			{arXiv preprint arXiv:1910.01155}\ } (\bibinfo {year} {2019})}\BibitemShut
	{NoStop}%
	\bibitem [{\citenamefont {K{\"u}bler}\ \emph {et~al.}(2019)\citenamefont
		{K{\"u}bler}, \citenamefont {Arrasmith}, \citenamefont {Cincio},\ and\
		\citenamefont {Coles}}]{kubler2019adaptive}%
	\BibitemOpen
	\bibfield  {author} {\bibinfo {author} {\bibfnamefont {J.~M.}\ \bibnamefont
			{K{\"u}bler}}, \bibinfo {author} {\bibfnamefont {A.}~\bibnamefont
			{Arrasmith}}, \bibinfo {author} {\bibfnamefont {L.}~\bibnamefont {Cincio}},\
		and\ \bibinfo {author} {\bibfnamefont {P.~J.}\ \bibnamefont {Coles}},\
	}\bibfield  {title} {\bibinfo {title} {{An adaptive optimizer for
				measurement-frugal variational algorithms}},\ }\href@noop {} {\bibfield
		{journal} {\bibinfo  {journal} {arXiv preprint arXiv:1909.09083}\ } (\bibinfo
		{year} {2019})}\BibitemShut {NoStop}%
	\bibitem [{\citenamefont {Gentini}\ \emph {et~al.}(2019)\citenamefont
		{Gentini}, \citenamefont {Cuccoli}, \citenamefont {Pirandola}, \citenamefont
		{Verrucchi},\ and\ \citenamefont {Banchi}}]{qfi}%
	\BibitemOpen
	\bibfield  {author} {\bibinfo {author} {\bibfnamefont {L.}~\bibnamefont
			{Gentini}}, \bibinfo {author} {\bibfnamefont {A.}~\bibnamefont {Cuccoli}},
		\bibinfo {author} {\bibfnamefont {S.}~\bibnamefont {Pirandola}}, \bibinfo
		{author} {\bibfnamefont {P.}~\bibnamefont {Verrucchi}},\ and\ \bibinfo
		{author} {\bibfnamefont {L.}~\bibnamefont {Banchi}},\ }\bibfield  {title}
	{\bibinfo {title} {{Noise-Assisted Variational Hybrid Quantum-Classical
				Optimization}},\ }\href@noop {} {\bibfield  {journal} {\bibinfo  {journal}
			{arXiv preprint arXiv:1912.06744}\ } (\bibinfo {year} {2019})}\BibitemShut
	{NoStop}%
	\bibitem [{\citenamefont {Crawford}\ \emph {et~al.}(2019)\citenamefont
		{Crawford}, \citenamefont {van Straaten}, \citenamefont {Wang}, \citenamefont
		{Parks}, \citenamefont {Campbell},\ and\ \citenamefont
		{Brierley}}]{Crawford2019}%
	\BibitemOpen
	\bibfield  {author} {\bibinfo {author} {\bibfnamefont {O.}~\bibnamefont
			{Crawford}}, \bibinfo {author} {\bibfnamefont {B.}~\bibnamefont {van
				Straaten}}, \bibinfo {author} {\bibfnamefont {D.}~\bibnamefont {Wang}},
		\bibinfo {author} {\bibfnamefont {T.}~\bibnamefont {Parks}}, \bibinfo
		{author} {\bibfnamefont {E.}~\bibnamefont {Campbell}},\ and\ \bibinfo
		{author} {\bibfnamefont {S.}~\bibnamefont {Brierley}},\ }\bibfield  {title}
	{\bibinfo {title} {{Efficient quantum measurement of Pauli operators}},\
	}\href@noop {} {\bibfield  {journal} {\bibinfo  {journal} {arXiv preprint
				arXiv:1908.06942}\ } (\bibinfo {year} {2019})}\BibitemShut {NoStop}%
	\bibitem [{\citenamefont {Arrasmith}\ \emph {et~al.}(2020)\citenamefont
		{Arrasmith}, \citenamefont {Cincio}, \citenamefont {Somma},\ and\
		\citenamefont {Coles}}]{arrasmith2020operator}%
	\BibitemOpen
	\bibfield  {author} {\bibinfo {author} {\bibfnamefont {A.}~\bibnamefont
			{Arrasmith}}, \bibinfo {author} {\bibfnamefont {L.}~\bibnamefont {Cincio}},
		\bibinfo {author} {\bibfnamefont {R.~D.}\ \bibnamefont {Somma}},\ and\
		\bibinfo {author} {\bibfnamefont {P.~J.}\ \bibnamefont {Coles}},\ }\bibfield
	{title} {\bibinfo {title} {{Operator Sampling for Shot-frugal Optimization in
				Variational Algorithms}},\ }\href@noop {} {\bibfield  {journal} {\bibinfo
			{journal} {arXiv preprint arXiv:2004.06252}\ } (\bibinfo {year}
		{2020})}\BibitemShut {NoStop}%
	\bibitem [{\citenamefont {Hadfield}\ \emph {et~al.}(2020)\citenamefont
		{Hadfield}, \citenamefont {Bravyi}, \citenamefont {Raymond},\ and\
		\citenamefont {Mezzacapo}}]{hadfield2020measurements}%
	\BibitemOpen
	\bibfield  {author} {\bibinfo {author} {\bibfnamefont {C.}~\bibnamefont
			{Hadfield}}, \bibinfo {author} {\bibfnamefont {S.}~\bibnamefont {Bravyi}},
		\bibinfo {author} {\bibfnamefont {R.}~\bibnamefont {Raymond}},\ and\ \bibinfo
		{author} {\bibfnamefont {A.}~\bibnamefont {Mezzacapo}},\ }\bibfield  {title}
	{\bibinfo {title} {{Measurements of Quantum Hamiltonians with Locally-Biased
				Classical Shadows}},\ }\href@noop {} {\bibfield  {journal} {\bibinfo
			{journal} {arXiv preprint arXiv:2006.15788}\ } (\bibinfo {year}
		{2020})}\BibitemShut {NoStop}%
	\bibitem [{\citenamefont {Koczor}\ and\ \citenamefont
		{Benjamin}(2019)}]{koczor2019quantum}%
	\BibitemOpen
	\bibfield  {author} {\bibinfo {author} {\bibfnamefont {B.}~\bibnamefont
			{Koczor}}\ and\ \bibinfo {author} {\bibfnamefont {S.~C.}\ \bibnamefont
			{Benjamin}},\ }\bibfield  {title} {\bibinfo {title} {{Quantum natural
				gradient generalised to non-unitary circuits}},\ }\href@noop {} {\bibfield
		{journal} {\bibinfo  {journal} {arXiv preprint arXiv:1912.08660}\ } (\bibinfo
		{year} {2019})}\BibitemShut {NoStop}%
	\bibitem [{\citenamefont {Ferguson}\ \emph {et~al.}(2021)\citenamefont
		{Ferguson}, \citenamefont {Dellantonio}, \citenamefont {Balushi},
		\citenamefont {Jansen}, \citenamefont {D\"ur},\ and\ \citenamefont
		{Muschik}}]{PhysRevLett.126.220501}%
	\BibitemOpen
	\bibfield  {author} {\bibinfo {author} {\bibfnamefont {R.~R.}\ \bibnamefont
			{Ferguson}}, \bibinfo {author} {\bibfnamefont {L.}~\bibnamefont
			{Dellantonio}}, \bibinfo {author} {\bibfnamefont {A.~A.}\ \bibnamefont
			{Balushi}}, \bibinfo {author} {\bibfnamefont {K.}~\bibnamefont {Jansen}},
		\bibinfo {author} {\bibfnamefont {W.}~\bibnamefont {D\"ur}},\ and\ \bibinfo
		{author} {\bibfnamefont {C.~A.}\ \bibnamefont {Muschik}},\ }\bibfield
	{title} {\bibinfo {title} {{Measurement-Based Variational Quantum
				Eigensolver}},\ }\href {https://doi.org/10.1103/PhysRevLett.126.220501}
	{\bibfield  {journal} {\bibinfo  {journal} {Phys. Rev. Lett.}\ }\textbf
		{\bibinfo {volume} {126}},\ \bibinfo {pages} {220501} (\bibinfo {year}
		{2021})}\BibitemShut {NoStop}%
	\bibitem [{\citenamefont {McArdle}\ \emph {et~al.}(2019)\citenamefont
		{McArdle}, \citenamefont {Jones}, \citenamefont {Endo}, \citenamefont {Li},
		\citenamefont {Benjamin},\ and\ \citenamefont {Yuan}}]{samimagtime}%
	\BibitemOpen
	\bibfield  {author} {\bibinfo {author} {\bibfnamefont {S.}~\bibnamefont
			{McArdle}}, \bibinfo {author} {\bibfnamefont {T.}~\bibnamefont {Jones}},
		\bibinfo {author} {\bibfnamefont {S.}~\bibnamefont {Endo}}, \bibinfo {author}
		{\bibfnamefont {Y.}~\bibnamefont {Li}}, \bibinfo {author} {\bibfnamefont
			{S.~C.}\ \bibnamefont {Benjamin}},\ and\ \bibinfo {author} {\bibfnamefont
			{X.}~\bibnamefont {Yuan}},\ }\bibfield  {title} {\bibinfo {title}
		{Variational ansatz-based quantum simulation of imaginary time evolution},\
	}\href@noop {} {\bibfield  {journal} {\bibinfo  {journal} {npj Quantum
				Information}\ }\textbf {\bibinfo {volume} {5}},\ \bibinfo {pages} {1}
		(\bibinfo {year} {2019})}\BibitemShut {NoStop}%
	\bibitem [{\citenamefont {Rebentrost}\ \emph {et~al.}(2019)\citenamefont
		{Rebentrost}, \citenamefont {Schuld}, \citenamefont {Wossnig}, \citenamefont
		{Petruccione},\ and\ \citenamefont {Lloyd}}]{Rebentrost_2019}%
	\BibitemOpen
	\bibfield  {author} {\bibinfo {author} {\bibfnamefont {P.}~\bibnamefont
			{Rebentrost}}, \bibinfo {author} {\bibfnamefont {M.}~\bibnamefont {Schuld}},
		\bibinfo {author} {\bibfnamefont {L.}~\bibnamefont {Wossnig}}, \bibinfo
		{author} {\bibfnamefont {F.}~\bibnamefont {Petruccione}},\ and\ \bibinfo
		{author} {\bibfnamefont {S.}~\bibnamefont {Lloyd}},\ }\bibfield  {title}
	{\bibinfo {title} {{Quantum gradient descent and Newton's method for
				constrained polynomial optimization}},\ }\href
	{https://doi.org/10.1088/1367-2630/ab2a9e} {\bibfield  {journal} {\bibinfo
			{journal} {New Journal of Physics}\ }\textbf {\bibinfo {volume} {21}},\
		\bibinfo {pages} {073023} (\bibinfo {year} {2019})}\BibitemShut {NoStop}%
	\bibitem [{\citenamefont {Wierichs}\ \emph {et~al.}(2020)\citenamefont
		{Wierichs}, \citenamefont {Gogolin},\ and\ \citenamefont
		{Kastoryano}}]{wierichs2020avoiding}%
	\BibitemOpen
	\bibfield  {author} {\bibinfo {author} {\bibfnamefont {D.}~\bibnamefont
			{Wierichs}}, \bibinfo {author} {\bibfnamefont {C.}~\bibnamefont {Gogolin}},\
		and\ \bibinfo {author} {\bibfnamefont {M.}~\bibnamefont {Kastoryano}},\
	}\bibfield  {title} {\bibinfo {title} {{Avoiding local minima in variational
				quantum eigensolvers with the natural gradient optimizer}},\ }\href@noop {}
	{\bibfield  {journal} {\bibinfo  {journal} {arXiv preprint arXiv:2004.14666}\
		} (\bibinfo {year} {2020})}\BibitemShut {NoStop}%
	\bibitem [{\citenamefont {Li}\ and\ \citenamefont
		{Benjamin}(2017{\natexlab{b}})}]{li2017efficient}%
	\BibitemOpen
	\bibfield  {author} {\bibinfo {author} {\bibfnamefont {Y.}~\bibnamefont
			{Li}}\ and\ \bibinfo {author} {\bibfnamefont {S.~C.}\ \bibnamefont
			{Benjamin}},\ }\bibfield  {title} {\bibinfo {title} {Efficient variational
			quantum simulator incorporating active error minimization},\ }\href@noop {}
	{\bibfield  {journal} {\bibinfo  {journal} {Phys. Rev. X}\ }\textbf {\bibinfo
			{volume} {7}},\ \bibinfo {pages} {021050} (\bibinfo {year}
		{2017}{\natexlab{b}})}\BibitemShut {NoStop}%
	\bibitem [{\citenamefont {Yuan}\ \emph {et~al.}(2019)\citenamefont {Yuan},
		\citenamefont {Endo}, \citenamefont {Zhao}, \citenamefont {Li},\ and\
		\citenamefont {Benjamin}}]{xiaotheory}%
	\BibitemOpen
	\bibfield  {author} {\bibinfo {author} {\bibfnamefont {X.}~\bibnamefont
			{Yuan}}, \bibinfo {author} {\bibfnamefont {S.}~\bibnamefont {Endo}}, \bibinfo
		{author} {\bibfnamefont {Q.}~\bibnamefont {Zhao}}, \bibinfo {author}
		{\bibfnamefont {Y.}~\bibnamefont {Li}},\ and\ \bibinfo {author}
		{\bibfnamefont {S.~C.}\ \bibnamefont {Benjamin}},\ }\bibfield  {title}
	{\bibinfo {title} {Theory of variational quantum simulation},\ }\href@noop {}
	{\bibfield  {journal} {\bibinfo  {journal} {Quantum}\ }\textbf {\bibinfo
			{volume} {3}},\ \bibinfo {pages} {191} (\bibinfo {year} {2019})}\BibitemShut
	{NoStop}%
	\bibitem [{\citenamefont {Stokes}\ \emph {et~al.}(2019)\citenamefont {Stokes},
		\citenamefont {Izaac}, \citenamefont {Killoran},\ and\ \citenamefont
		{Carleo}}]{quantumnatgrad}%
	\BibitemOpen
	\bibfield  {author} {\bibinfo {author} {\bibfnamefont {J.}~\bibnamefont
			{Stokes}}, \bibinfo {author} {\bibfnamefont {J.}~\bibnamefont {Izaac}},
		\bibinfo {author} {\bibfnamefont {N.}~\bibnamefont {Killoran}},\ and\
		\bibinfo {author} {\bibfnamefont {G.}~\bibnamefont {Carleo}},\ }\bibfield
	{title} {\bibinfo {title} {Quantum natural gradient},\ }\href@noop {}
	{\bibfield  {journal} {\bibinfo  {journal} {arXiv preprint arXiv:1909.02108}\
		} (\bibinfo {year} {2019})}\BibitemShut {NoStop}%
	\bibitem [{Note1()}]{Note1}%
	\BibitemOpen
	\bibinfo {note} {Where the ansatz parameter $\theta $ corresponds to a global
		$Z$ rotation of all the qubits and therefore $r_g = N$.}\BibitemShut {Stop}%
	\bibitem [{\citenamefont {Pezz\`e}\ \emph {et~al.}(2018)\citenamefont
		{Pezz\`e}, \citenamefont {Smerzi}, \citenamefont {Oberthaler}, \citenamefont
		{Schmied},\ and\ \citenamefont {Treutlein}}]{review}%
	\BibitemOpen
	\bibfield  {author} {\bibinfo {author} {\bibfnamefont {L.}~\bibnamefont
			{Pezz\`e}}, \bibinfo {author} {\bibfnamefont {A.}~\bibnamefont {Smerzi}},
		\bibinfo {author} {\bibfnamefont {M.~K.}\ \bibnamefont {Oberthaler}},
		\bibinfo {author} {\bibfnamefont {R.}~\bibnamefont {Schmied}},\ and\ \bibinfo
		{author} {\bibfnamefont {P.}~\bibnamefont {Treutlein}},\ }\bibfield  {title}
	{\bibinfo {title} {{Quantum metrology with nonclassical states of atomic
				ensembles}},\ }\href {https://doi.org/10.1103/RevModPhys.90.035005}
	{\bibfield  {journal} {\bibinfo  {journal} {Rev. Mod. Phys.}\ }\textbf
		{\bibinfo {volume} {90}},\ \bibinfo {pages} {035005} (\bibinfo {year}
		{2018})}\BibitemShut {NoStop}%
	\bibitem [{\citenamefont {Giovannetti}\ \emph {et~al.}(2011)\citenamefont
		{Giovannetti}, \citenamefont {Lloyd},\ and\ \citenamefont
		{Maccone}}]{giovannetti11}%
	\BibitemOpen
	\bibfield  {author} {\bibinfo {author} {\bibfnamefont {V.}~\bibnamefont
			{Giovannetti}}, \bibinfo {author} {\bibfnamefont {S.}~\bibnamefont {Lloyd}},\
		and\ \bibinfo {author} {\bibfnamefont {L.}~\bibnamefont {Maccone}},\
	}\bibfield  {title} {\bibinfo {title} {{Advances in quantum metrology}},\
	}\href@noop {} {\bibfield  {journal} {\bibinfo  {journal} {Nat. Phot.}\
		}\textbf {\bibinfo {volume} {5}},\ \bibinfo {pages} {222} (\bibinfo {year}
		{2011})}\BibitemShut {NoStop}%
	\bibitem [{\citenamefont {Koczor}\ \emph {et~al.}(2020)\citenamefont {Koczor},
		\citenamefont {Endo}, \citenamefont {Jones}, \citenamefont {Matsuzaki},\ and\
		\citenamefont {Benjamin}}]{koczor2019variational}%
	\BibitemOpen
	\bibfield  {author} {\bibinfo {author} {\bibfnamefont {B.}~\bibnamefont
			{Koczor}}, \bibinfo {author} {\bibfnamefont {S.}~\bibnamefont {Endo}},
		\bibinfo {author} {\bibfnamefont {T.}~\bibnamefont {Jones}}, \bibinfo
		{author} {\bibfnamefont {Y.}~\bibnamefont {Matsuzaki}},\ and\ \bibinfo
		{author} {\bibfnamefont {S.~C.}\ \bibnamefont {Benjamin}},\ }\bibfield
	{title} {\bibinfo {title} {{Variational-State Quantum Metrology}},\ }\href
	{https://doi.org/10.1088/1367-2630/ab965e} {\bibfield  {journal} {\bibinfo
			{journal} {New J. Phys.}\ }\textbf {\bibinfo {volume} {22}},\ \bibinfo
		{pages} {083038} (\bibinfo {year} {2020})}\BibitemShut {NoStop}%
	\bibitem [{Note2()}]{Note2}%
	\BibitemOpen
	\bibinfo {note} {In Fig.~1 in the uniformly distributed scenario the same
		number of measurements are used to determine every entry of the metric tensor
		(gradient vector) and only the overall number $N_{F}$ ($N_{g}$) of
		measurements to determine the metric tensor (gradient vector) is chosen
		optimally. In the naive scheme both $N_{F}$ and $N_{g}$ are additionally
		fixed}\BibitemShut {NoStop}%
	\bibitem [{\citenamefont {Yen}\ \emph {et~al.}(2020)\citenamefont {Yen},
		\citenamefont {Verteletskyi},\ and\ \citenamefont
		{Izmaylov}}]{yen2020measuring}%
	\BibitemOpen
	\bibfield  {author} {\bibinfo {author} {\bibfnamefont {T.-C.}\ \bibnamefont
			{Yen}}, \bibinfo {author} {\bibfnamefont {V.}~\bibnamefont {Verteletskyi}},\
		and\ \bibinfo {author} {\bibfnamefont {A.~F.}\ \bibnamefont {Izmaylov}},\
	}\bibfield  {title} {\bibinfo {title} {{Measuring all compatible operators in
				one series of single-qubit measurements using unitary transformations}},\
	}\href@noop {} {\bibfield  {journal} {\bibinfo  {journal} {Journal of
				chemical theory and computation}\ }\textbf {\bibinfo {volume} {16}},\
		\bibinfo {pages} {2400} (\bibinfo {year} {2020})}\BibitemShut {NoStop}%
	\bibitem [{\citenamefont {Jena}\ \emph {et~al.}(2019)\citenamefont {Jena},
		\citenamefont {Genin},\ and\ \citenamefont {Mosca}}]{jena2019pauli}%
	\BibitemOpen
	\bibfield  {author} {\bibinfo {author} {\bibfnamefont {A.}~\bibnamefont
			{Jena}}, \bibinfo {author} {\bibfnamefont {S.}~\bibnamefont {Genin}},\ and\
		\bibinfo {author} {\bibfnamefont {M.}~\bibnamefont {Mosca}},\ }\bibfield
	{title} {\bibinfo {title} {{Pauli partitioning with respect to gate sets}},\
	}\href@noop {} {\bibfield  {journal} {\bibinfo  {journal} {arXiv preprint
				arXiv:1907.07859}\ } (\bibinfo {year} {2019})}\BibitemShut {NoStop}%
	\bibitem [{\citenamefont {Gokhale}\ \emph {et~al.}(2020)\citenamefont
		{Gokhale}, \citenamefont {Angiuli}, \citenamefont {Ding}, \citenamefont
		{Gui}, \citenamefont {Tomesh}, \citenamefont {Suchara}, \citenamefont
		{Martonosi},\ and\ \citenamefont {Chong}}]{gokhale2020n}%
	\BibitemOpen
	\bibfield  {author} {\bibinfo {author} {\bibfnamefont {P.}~\bibnamefont
			{Gokhale}}, \bibinfo {author} {\bibfnamefont {O.}~\bibnamefont {Angiuli}},
		\bibinfo {author} {\bibfnamefont {Y.}~\bibnamefont {Ding}}, \bibinfo {author}
		{\bibfnamefont {K.}~\bibnamefont {Gui}}, \bibinfo {author} {\bibfnamefont
			{T.}~\bibnamefont {Tomesh}}, \bibinfo {author} {\bibfnamefont
			{M.}~\bibnamefont {Suchara}}, \bibinfo {author} {\bibfnamefont
			{M.}~\bibnamefont {Martonosi}},\ and\ \bibinfo {author} {\bibfnamefont
			{F.~T.}\ \bibnamefont {Chong}},\ }\bibfield  {title} {\bibinfo {title}
		{{O(N3) Measurement Cost for Variational Quantum Eigensolver on Molecular
				Hamiltonians}},\ }\href@noop {} {\bibfield  {journal} {\bibinfo  {journal}
			{IEEE Transactions on Quantum Engineering}\ }\textbf {\bibinfo {volume}
			{1}},\ \bibinfo {pages} {1} (\bibinfo {year} {2020})}\BibitemShut {NoStop}%
	\bibitem [{\citenamefont {Gokhale}\ \emph {et~al.}(2019)\citenamefont
		{Gokhale}, \citenamefont {Angiuli}, \citenamefont {Ding}, \citenamefont
		{Gui}, \citenamefont {Tomesh}, \citenamefont {Suchara}, \citenamefont
		{Martonosi},\ and\ \citenamefont {Chong}}]{gokhale2019minimizing}%
	\BibitemOpen
	\bibfield  {author} {\bibinfo {author} {\bibfnamefont {P.}~\bibnamefont
			{Gokhale}}, \bibinfo {author} {\bibfnamefont {O.}~\bibnamefont {Angiuli}},
		\bibinfo {author} {\bibfnamefont {Y.}~\bibnamefont {Ding}}, \bibinfo {author}
		{\bibfnamefont {K.}~\bibnamefont {Gui}}, \bibinfo {author} {\bibfnamefont
			{T.}~\bibnamefont {Tomesh}}, \bibinfo {author} {\bibfnamefont
			{M.}~\bibnamefont {Suchara}}, \bibinfo {author} {\bibfnamefont
			{M.}~\bibnamefont {Martonosi}},\ and\ \bibinfo {author} {\bibfnamefont
			{F.~T.}\ \bibnamefont {Chong}},\ }\bibfield  {title} {\bibinfo {title}
		{{Minimizing state preparations in variational quantum eigensolver by
				partitioning into commuting families}},\ }\href@noop {} {\bibfield  {journal}
		{\bibinfo  {journal} {arXiv preprint arXiv:1907.13623}\ } (\bibinfo {year}
		{2019})}\BibitemShut {NoStop}%
	\bibitem [{\citenamefont {McClean}\ \emph {et~al.}(2018)\citenamefont
		{McClean}, \citenamefont {Boixo}, \citenamefont {Smelyanskiy}, \citenamefont
		{Babbush},\ and\ \citenamefont {Neven}}]{mcclean2018barren}%
	\BibitemOpen
	\bibfield  {author} {\bibinfo {author} {\bibfnamefont {J.~R.}\ \bibnamefont
			{McClean}}, \bibinfo {author} {\bibfnamefont {S.}~\bibnamefont {Boixo}},
		\bibinfo {author} {\bibfnamefont {V.~N.}\ \bibnamefont {Smelyanskiy}},
		\bibinfo {author} {\bibfnamefont {R.}~\bibnamefont {Babbush}},\ and\ \bibinfo
		{author} {\bibfnamefont {H.}~\bibnamefont {Neven}},\ }\bibfield  {title}
	{\bibinfo {title} {{Barren plateaus in quantum neural network training
				landscapes}},\ }\href@noop {} {\bibfield  {journal} {\bibinfo  {journal}
			{Nature communications}\ }\textbf {\bibinfo {volume} {9}},\ \bibinfo {pages}
		{1} (\bibinfo {year} {2018})}\BibitemShut {NoStop}%
	\bibitem [{\citenamefont {Grant}\ \emph {et~al.}(2019)\citenamefont {Grant},
		\citenamefont {Wossnig}, \citenamefont {Ostaszewski},\ and\ \citenamefont
		{Benedetti}}]{grant2019initialization}%
	\BibitemOpen
	\bibfield  {author} {\bibinfo {author} {\bibfnamefont {E.}~\bibnamefont
			{Grant}}, \bibinfo {author} {\bibfnamefont {L.}~\bibnamefont {Wossnig}},
		\bibinfo {author} {\bibfnamefont {M.}~\bibnamefont {Ostaszewski}},\ and\
		\bibinfo {author} {\bibfnamefont {M.}~\bibnamefont {Benedetti}},\ }\bibfield
	{title} {\bibinfo {title} {{An initialization strategy for addressing barren
				plateaus in parametrized quantum circuits}},\ }\href@noop {} {\bibfield
		{journal} {\bibinfo  {journal} {Quantum}\ }\textbf {\bibinfo {volume} {3}},\
		\bibinfo {pages} {214} (\bibinfo {year} {2019})}\BibitemShut {NoStop}%
	\bibitem [{\citenamefont {Cerezo}\ \emph
		{et~al.}(2020{\natexlab{b}})\citenamefont {Cerezo}, \citenamefont {Sone},
		\citenamefont {Volkoff}, \citenamefont {Cincio},\ and\ \citenamefont
		{Coles}}]{cerezo2020cost}%
	\BibitemOpen
	\bibfield  {author} {\bibinfo {author} {\bibfnamefont {M.}~\bibnamefont
			{Cerezo}}, \bibinfo {author} {\bibfnamefont {A.}~\bibnamefont {Sone}},
		\bibinfo {author} {\bibfnamefont {T.}~\bibnamefont {Volkoff}}, \bibinfo
		{author} {\bibfnamefont {L.}~\bibnamefont {Cincio}},\ and\ \bibinfo {author}
		{\bibfnamefont {P.~J.}\ \bibnamefont {Coles}},\ }\bibfield  {title} {\bibinfo
		{title} {{Cost-Function-Dependent Barren Plateaus in Shallow Quantum Neural
				Networks}},\ }\href@noop {} {\bibfield  {journal} {\bibinfo  {journal} {arXiv
				preprint arXiv:2001.00550}\ } (\bibinfo {year}
		{2020}{\natexlab{b}})}\BibitemShut {NoStop}%
	\bibitem [{\citenamefont {Jones}\ \emph
		{et~al.}(2019{\natexlab{b}})\citenamefont {Jones}, \citenamefont {Brown},
		\citenamefont {Bush},\ and\ \citenamefont {Benjamin}}]{quest}%
	\BibitemOpen
	\bibfield  {author} {\bibinfo {author} {\bibfnamefont {T.}~\bibnamefont
			{Jones}}, \bibinfo {author} {\bibfnamefont {A.}~\bibnamefont {Brown}},
		\bibinfo {author} {\bibfnamefont {I.}~\bibnamefont {Bush}},\ and\ \bibinfo
		{author} {\bibfnamefont {S.~C.}\ \bibnamefont {Benjamin}},\ }\bibfield
	{title} {\bibinfo {title} {{QuEST and high performance simulation of quantum
				computers}},\ }\href@noop {} {\bibfield  {journal} {\bibinfo  {journal} {Sci.
				Rep.}\ }\textbf {\bibinfo {volume} {9}},\ \bibinfo {pages} {10736} (\bibinfo
		{year} {2019}{\natexlab{b}})}\BibitemShut {NoStop}%
	\bibitem [{\citenamefont {Jones}\ and\ \citenamefont
		{Benjamin}(2020)}]{questlink}%
	\BibitemOpen
	\bibfield  {author} {\bibinfo {author} {\bibfnamefont {T.}~\bibnamefont
			{Jones}}\ and\ \bibinfo {author} {\bibfnamefont {S.~C.}\ \bibnamefont
			{Benjamin}},\ }\bibfield  {title} {\bibinfo {title} {{QuESTlink--Mathematica
				embiggened by a hardware-optimised quantum emulator}},\ }\href@noop {}
	{\bibfield  {journal} {\bibinfo  {journal} {Quantum Science and Technology}\
		} (\bibinfo {year} {2020})}\BibitemShut {NoStop}%
	\bibitem [{\citenamefont {Ku}(1966)}]{Ku1966}%
	\BibitemOpen
	\bibfield  {author} {\bibinfo {author} {\bibfnamefont {H.}~\bibnamefont
			{Ku}},\ }\bibfield  {title} {\bibinfo {title} {{Notes on the use of
				propagation of error formulas}},\ }\href
	{https://doi.org/10.6028/jres.070c.025} {\bibfield  {journal} {\bibinfo
			{journal} {Journal of Research of the National Bureau of Standards, Section
				C: Engineering and Instrumentation}\ }\textbf {\bibinfo {volume} {70C}},\
		\bibinfo {pages} {263} (\bibinfo {year} {1966})}\BibitemShut {NoStop}%
	\bibitem [{\citenamefont {Lefebvre}\ \emph {et~al.}(2000)\citenamefont
		{Lefebvre}, \citenamefont {Keeler}, \citenamefont {Sobie},\ and\
		\citenamefont {White}}]{Lefebvre2000}%
	\BibitemOpen
	\bibfield  {author} {\bibinfo {author} {\bibfnamefont {M.}~\bibnamefont
			{Lefebvre}}, \bibinfo {author} {\bibfnamefont {R.~K.}\ \bibnamefont
			{Keeler}}, \bibinfo {author} {\bibfnamefont {R.}~\bibnamefont {Sobie}},\ and\
		\bibinfo {author} {\bibfnamefont {J.}~\bibnamefont {White}},\ }\bibfield
	{title} {\bibinfo {title} {{Propagation of errors for matrix inversion}},\
	}\href {https://doi.org/10.1016/S0168-9002(00)00323-5} {\bibfield  {journal}
		{\bibinfo  {journal} {Nuclear Instruments and Methods in Physics Research,
				Section A: Accelerators, Spectrometers, Detectors and Associated Equipment}\
		}\textbf {\bibinfo {volume} {451}},\ \bibinfo {pages} {520} (\bibinfo {year}
		{2000})},\ \Eprint {https://arxiv.org/abs/9909031} {arXiv:9909031 [hep-ex]}
	\BibitemShut {NoStop}%
\end{thebibliography}
\end{document}